\newtheorem{thm}{Theorem}
\newtheorem{lem}[thm]{Lemma}
\newtheorem{defn}[thm]{Definition}
\newcommand{\lra}{\longrightarrow}
\newcommand{\be}{\begin{equation}}
\newcommand{\ee}{\end{equation}}
\newcommand{\bea}{\begin{eqnarray}}
\newcommand{\eea}{\end{eqnarray}}
\newcommand{\1}{\mathbbm{1}}        
\newcommand{\tr}[1]{{\rm tr}\left[#1\right]}
\renewcommand{\>}{\rangle}
\newcommand{\<}{\langle}
\DeclareMathOperator{\trace}{tr}
\DeclareMathOperator{\rank}{rank}
\DeclareMathOperator{\range}{range}
\begin{document}

\title{A canonical form for Projected Entangled Pair States and applications}

\author{D. P\'erez-Garc\'ia$^1$, M. Sanz$^2$, C. E. Gonz\'alez-Guill\'en$^1$, M. M. Wolf$^3$, J. I. Cirac$^2$}
  \affiliation{$^1$Dpto. An\'alisis Matem\'atico and IMI, Universidad Complutense de Madrid, 28040 Madrid, Spain \\
  $^2$Max-Planck-Institut f\"ur Quantenoptik, Hans-Kopfermann-Str. 1, 85748 Garching, Germany \\
  $^3$Niels Bohr Institute, Blegdamsvej 17, 2100 Copenhagen, Denmark}

\begin{abstract}
We show that two different tensors defining the same translational
invariant {\it injective} Projected Entangled Pair State (PEPS) in a square lattice must be the same
up to a trivial gauge freedom. This allows us to characterize the
existence of any local or spatial symmetry in the state. As an application of these results we prove that a $SU(2)$ invariant PEPS with half-integer spin cannot be injective, which can be seen as a Lieb-Shultz-Mattis theorem in this context. We also give the natural generalization for $U(1)$ symmetry in the spirit of Oshikawa-Yamanaka-Affleck, and show that a PEPS with Wilson loops cannot be injective.
\end{abstract}

\keywords{}

\maketitle

\begin{section}{Introduction}\label{sec:intro}
The isolation of Projected Entangled Pair States (PEPS) \cite{AKLT88,PEPS} as an
appropriate representation for ground states of 2D local
Hamiltonians \cite{Hastings} turns the problem of understanding 2D
quantum many body systems into the question: How can one characterize
the different phases of matter in terms of the tensors defining a
PEPS?

Though there are known examples of PEPS with topological order \cite{Verstraete, Guifre}, power law decay of correlations \cite{Verstraete}, $SU(2)$-symmetry \cite{AKLT88, Rico}, or universal power for measurement based quantum computation \cite{Verstraete, Gross}, characterizing these phases has turned out to be a daunting task. In this paper we provide a simple characterization of the existence of symmetries (both local and spatial) as a trivial consequence of the fact, which we call canonical form,  that two PEPS describing the same
translational invariant state in a square lattice are related by
invertible matrices in the virtual spins, as in Fig.
\ref{figtheorem}.

\begin{figure}[h!]
 \centering
  \includegraphics[width=0.5\textwidth]{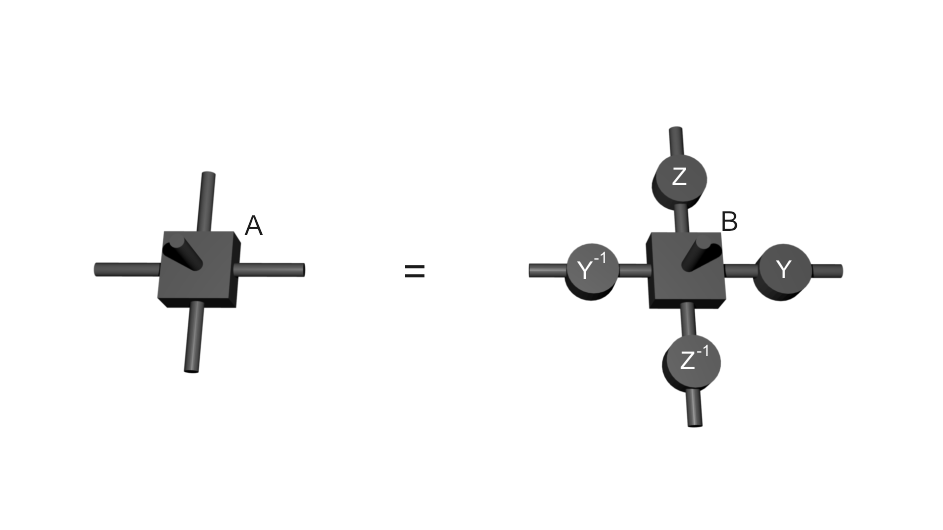}
  \caption{\footnotesize \emph{This is the Canonical Form for PEPS, i.e.,
  the relation that holds between the tensors
  which define the PEPS when they represent the same state.}}
  \label{figtheorem}
\end{figure}

This simple characterization illuminates the restrictions that symmetries impose on quantum systems. For instance one can in this context understand the validity of the Lieb-Shultz-Mattis theorem in arbitrary dimensions \cite{LSM61, HastingsLSM}, as well as its $U(1)$ generalization due to Oshikawa, Yamanaka and Affleck \cite{OYA97} (originally only in the 1D case). We can also understand why and how three of the main indicators of topological order, namely degeneracy of the ground state, existence of Wilson loops and correction to the area law, are related. Moreover, it has been proven in \cite{PWSVC08} that the existence of symmetries in increasing sizes of the system gives the appropriate definition of string order in 2D, overcoming the drawbacks sketched in \cite{Anfuso}. The importance of string orders in the study of quantum phase transitions, may vaticinate interesting applications in the future along this direction.

\

Before introducing PEPS formally, we will start with the simpler case of Matrix Product States (MPS), their 1D analogue \cite{FNW92, PVWC07}. Let us consider a system with periodic boundary conditions of $N$
(large but finite number of) sites, each of them with an associate
$d$-dimensional Hilbert space. An MPS on this
system is defined by a set of $D\times D$ matrices $\{ A_i \in \mathcal{M}_D
\, , \, i= 1, \ldots, d \}$  and reads
\begin{equation*}
|\phi_A \> = \sum_{i_1,\ldots, i_N} \tr{A_{i_1} \cdots A_{i_N}} |i_1 \cdots i_N \>\; .
\end{equation*}

An alternative but equivalent view is the valence bond construction: consider a pair of $D$ dimensional
ancillary/virtual Hilbert spaces associated to each site and connect every pair of neighboring virtual Hilbert spaces by maximally entangled states
(usually called entangled \textit{bonds}). The MPS is then the result of {\it projecting} the virtual Hilbert spaces into the real/physical one by the map
$\mathcal{A} =  \sum_{i \, \alpha \, \beta} A_{i , \alpha \beta}
|i\> \< \alpha \beta |$.

A key property within MPS theory is called {\it injectivity} \cite{FNW92, PVWC07} and it essentially means that different boundary conditions give rise to different states. Let us formally define it:
\begin{defn}[\it{Injectivity}]
An MPS $|\phi_A\>$ is {\it injective in a region $R$} (whose minimal length we denote by $L_0$) if the map $\Gamma_R(X)=\sum_{i_1,\ldots,i_{L_0}}\trace(XA_{i_1}\cdots
A_{i_{L_0}})|i_1\cdots i_{L_0}\>$ which associates boundary conditions of $R$ to states in $R$ is injective. That is, different boundary conditions give rise to different states. An MPS is said to be {\it injective} if it is injective for some region $R$.
\end{defn}

If we do not take into consideration translational invariance, we can talk about MPS with `open boundary conditions' (OBC). An OBC-MPS is then a state of the form
\begin{equation*}
|\Phi \> = \sum_{i_1,\ldots, i_N} A_{i_1}^{[1]} \cdots A_{i_N}^{[N]}
|i_1 \cdots i_N \>
\end{equation*}
where $A_i^{[k]}$ are $D_k \times D_{k+1}$ matrices with $D_1 =
D_{N+1} = 1$. By taking successive singular value decompositions one can always find  a {\it
 canonical} OBC-MPS form of a state \cite{Guifre2,PVWC07}, which is characterized by the following conditions:
\begin{enumerate}
\item $\sum_i A^{[m]}_iA^{[m]\dagger}_i=\mathbbm {1}$ for all
$1\le m\le N$. \item $\sum_i A^{[m]\dagger}_i\Lambda^{[m-1]}
A^{[m]}_i=\Lambda^{[m]}, $ for all $1\le m\le N$,  \item
$\Lambda^{[0]}=\Lambda^{[N]}=1$ and each $\Lambda^{[m]}$ is
diagonal, positive, full rank and $\trace{\Lambda^{[m]}}=1$.
\end{enumerate}

PEPS are the natural extension of the MPS beyond the 1D case, where the {\it projection} is
performed from a larger number of virtual Hilbert spaces depending
on the co-ordination number of the lattice (the square lattice, for instance,
 has four virtual Hilbert spaces). Therefore, the local
\emph{building blocks} are tensors instead of matrices which implies that most calculations
become much harder \cite{Norbert}.

Let us consider an $L \times N$ square lattice of spins of
dimension $d$. A PEPS consists on a tensor $A^{i}_{ldru}$ with 5 indexes: $i$ corresponding to the physical spin of
dimension $d$ and $l,d,r,u$ (left, down, right, up) corresponding to four virtual spaces of dimensions (\emph{bonds})
$D_1$ and $D_2$, as we did for MPS. The connections between two
sites are again performed by means of maximally entangled states
$|\Omega \> = \sum_{\alpha} |\alpha \alpha \>$. Then, the shape of
these states is
\begin{equation*}
|\phi_A \> = \sum_{i_1,\ldots,i_{NL}} \mathcal{C} ( A^{i}_{ldru}) |i_1 \ldots i_{NL} \>
\end{equation*}
where $\mathcal{C}$ means the contraction of all tensors
$A^{i}_{ldru}$ along the square lattice.

Associated to any PEPS $|\phi_A\>$ we can define a parent Hamiltonian $H_A$ \cite{PVWC08}, which is locally defined by the projector onto $\range(\Gamma_R)^\perp$. It is clear that the $|\phi_A\>$ is a ground state for $H_A$ and that it minimizes the energy locally, that is, $H_A$ is frustration free. In the case of 1D it is proven in \cite{FNW92,PVWC07} that a MPS is injective if and only if $|\phi_A\>$ is the unique ground state of $H_A$.

We can define the injectivity property for PEPS in the same way (see Fig \ref{fig:injectivity}). That is, the PEPS $|\phi_A\>$ is injective in a region $R$ if $\Gamma_R$ is injective. As in the 1D case it is clear that injectivity is a generic condition.
\begin{figure}[h!]
  \includegraphics[width=8cm]{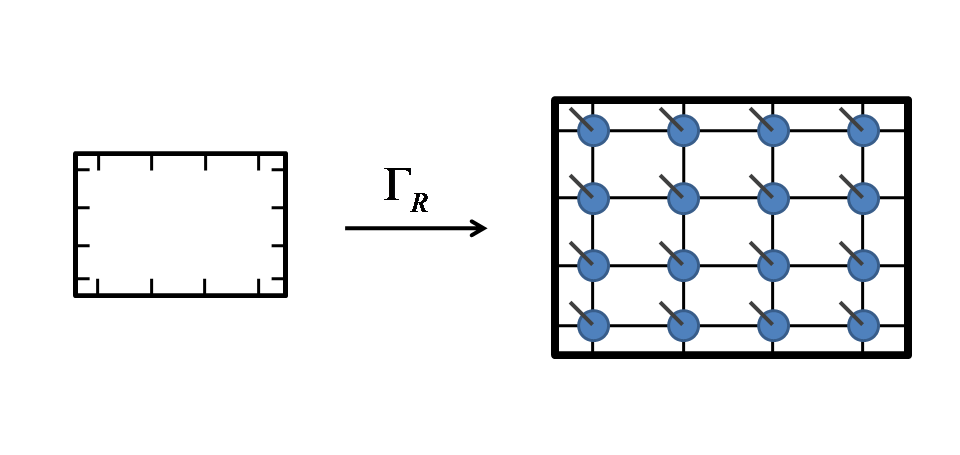}\\
  \caption{A PEPS is injective in a region $R$ if $\Gamma_R$ is injective, that is, if different boundary conditions give rise to different states in $R$.}\label{fig:injectivity}
\end{figure}

In the applications we will give below (Lieb-Shultz-Mattis, Wilson loops), the conclusion will often be that a given PEPS is not injective. What does this mean? As we list below, injectivity is closely related to uniqueness of the ground state of the parent Hamiltonian and to the saturation of the area law for the 0-Renyi entropy.
\begin{enumerate}
\item If a PEPS is injective, it is the unique ground state of its parent Hamiltonian \cite{PVWC08}.
\item If a PEPS is not-injective for any cylinder-shape region, any local frustration free Hamiltonian for which the given PEPS is a ground state has a degenerate ground space, as long as we grow one of the directions exponentially faster than the other. This is a trivial consequence of the 1D case \cite{PVWC07}.
\item The $0$-Renyi entropy of the reduced density matrix $\rho_R$ of a region $R$ of a PEPS with bond dimension $D$ is $\le |\partial R|\log D$. It is easy to see that if $S_0(\rho_R)=|\partial R|\log D$, then the PEPS is injective. That is, if a PEPS is not injective, there is a {\it correction} to the area law for the $0$-Renyi entropy.
\end{enumerate}

To finish this section we introduce the following notation. If $R$ is a region of the considered lattice underlying the PEPS, we denote by $A^R$  the joint tensor obtained after contracting all the tensors inside region $R$.
\end{section}


\begin{section}{The canonical form for MPS}

It is shown in \cite[Theorem 6]{PVWC07} that two injective
representations of the same MPS must be related by an invertible
matrix $R$ as $A_i=RB_iR^{-1}$. This holds if the number of sites satisfies
$N \ge 2 L_0 + D^4$, where $L_0$ is the size from which on one has
injectivity and $D$ is the bond dimension of the MPS. Since we are
interested (see the argument in Theorem \ref{thm:main} below) to
apply this to a ``column'' of a PEPS, the exponential dependence on
$D$ would be critical. So in this section, we modify the proof of
\cite[Theorem 6]{PVWC07} to make $N$ depend  on $L_0$ only. In
particular, we obtain that the result holds when $N\ge 4L_0+1$.

\begin{thm}\label{MPS}
Let
\begin{equation*}
|\psi_A\> = \sum_{i_1,\ldots,i_N = 1}^{d} {\rm  tr} (A_{i_1} \cdots A_{i_n})|i_1 \cdots i_N\>
\end{equation*}
and
\begin{equation*}
|\psi_B\> = \sum_{i_1,\ldots,i_N = 1}^{d} {\rm tr} (B_{i_1} \cdots B_{i_n})|i_1 \cdots i_N\>
\end{equation*}
be translational invariant MPS representations with bond dimension
$D$ which are injective for regions of size $L_0$. Then, if
$|\psi_A\> = |\psi_B\>$ and $N\geq 4L_0+1$, there exists an
invertible matrix $R$ such that $A_i = R B_i R^{-1}$, for all $i$.
\end{thm}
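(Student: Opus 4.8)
The plan is to reformulate injectivity algebraically and then turn the single scalar identity $\trace(A_{i_1}\cdots A_{i_N})=\trace(B_{i_1}\cdots B_{i_N})$ into a family of rigid matrix identities by \emph{cutting the ring}. First I would record that injectivity at length $L_0$ is equivalent to the products $\{A_{i_1}\cdots A_{i_{L_0}}\}$, and likewise the $B$'s, spanning the full algebra $\mathcal{M}_D$: $\Gamma_R(X)=0$ says exactly that $X$ is trace-orthogonal to every such product, and nondegeneracy of $(X,Y)\mapsto\trace(XY)$ makes this equivalent to surjectivity onto $\mathcal{M}_D$. Consequently products of any length $m\ge L_0$ span as well, so for each matrix unit $E_{\alpha\beta}=\ket{\alpha}\bra{\beta}$ there are scalars $\lambda^{(\alpha\beta)}_{\mathbf j}$ with $\sum_{\mathbf j}\lambda^{(\alpha\beta)}_{\mathbf j}A_{\mathbf j}=E_{\alpha\beta}$. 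Now fix a contiguous block of length $k$ on the ring whose complementary arc has length $N-k\ge L_0$: summing the equality of amplitudes against these $\lambda^{(\alpha\beta)}$ on the arc replaces the $A$-arc by $E_{\alpha\beta}$ while turning the $B$-arc into the fixed matrix $M^{(k)}_{\alpha\beta}:=\sum_{\mathbf j}\lambda^{(\alpha\beta)}_{\mathbf j}B_{\mathbf j}$, which does not depend on the block indices. This gives $(A_{i_1}\cdots A_{i_k})_{\beta\alpha}=\trace(B_{i_1}\cdots B_{i_k}\,M^{(k)}_{\alpha\beta})$, i.e.\ a fixed linear map $\Phi_k$ on $\mathcal{M}_D$ with $A_{i_1}\cdots A_{i_k}=\Phi_k(B_{i_1}\cdots B_{i_k})$ for every block; since length-$k$ products span on both sides, $\Phi_k$ is bijective.

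The key content is to upgrade these merely \emph{linear} maps to a single similarity. For $k,\ell\ge L_0$ with $k+\ell\le N-L_0$, factoring a length-$(k+\ell)$ block in two ways yields $\Phi_k(P)\Phi_\ell(Q)=\Phi_{k+\ell}(PQ)$ on the spanning products $P,Q$, hence for all $X,Y\in\mathcal{M}_D$ by bilinearity. Taking $k=\ell=L_0$ and setting one argument to $\1$ gives $\Phi_{L_0}(X)S=S\Phi_{L_0}(X)=\Phi_{2L_0}(X)$ with $S:=\Phi_{L_0}(\1)$; as $\Phi_{L_0}(X)$ ranges over all of $\mathcal{M}_D$, the matrix $S$ commutes with everything, so $S=c\,\1$ is scalar, and $c\ne 0$ because $\Phi_{2L_0}$ is bijective. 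Then $c^{-1}\Phi_{L_0}$ is a \emph{unital algebra automorphism} of $\mathcal{M}_D$, so by innerness of automorphisms of a matrix algebra $\Phi_{L_0}(X)=c\,RXR^{-1}$ for some invertible $R$. Running the identical bookkeeping with $k=L_0,\ell=L_0+1$ (again checking invertibility and centrality of the relevant constant) gives $\Phi_{L_0+1}(X)=c'\,RXR^{-1}$ with the \emph{same} $R$.

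It remains to descend from these $L_0$-block identities to single sites. Writing $\phi(X):=R^{-1}\Phi_1(X)R$, where $\Phi_1$ needs only $N-1\ge L_0$, the two previous relations read $\phi(B_{i_1})\cdots\phi(B_{i_{L_0}})=c\,B_{i_1}\cdots B_{i_{L_0}}$ and the analogous length-$(L_0+1)$ identity with constant $c'$. Feeding the first into the second and summing the trailing length-$L_0$ block against coefficients that produce $\1$ collapses everything to $\phi(B_i)=(c'/c)B_i$, that is $A_i=\gamma\,RB_iR^{-1}$. Matching the global amplitude forces $\gamma^N=1$; since such a root of unity leaves the state unchanged it is absorbed into $B$, yielding $A_i=RB_iR^{-1}$. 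Every block length invoked is at most $2L_0+1$ with complementary arcs of length $\ge L_0$, so $N\ge 4L_0+1$ is comfortably sufficient and, crucially, the bound is linear in $L_0$ with no dependence on $D$.

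I expect the main obstacle to be the rigidity step rather than the combinatorics: the cutting step and the final scalar are routine, but passing from the bijective linear maps $\Phi_k$ to a single inner automorphism, via the centrality and invertibility of $S$ and the structure theorem for automorphisms of $\mathcal{M}_D$, together with the careful descent from block relations to the single-site relation $A_i=RB_iR^{-1}$, is where the argument must be handled with care.
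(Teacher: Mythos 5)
Your proposal is correct, but it takes a genuinely different route from the paper's. The paper cuts the ring open and runs through the OBC canonical-form machinery: successive SVDs, the uniqueness theorem for the canonical form (Theorem 3.1.1' of Horn--Johnson) producing unitaries $V_j$, a chain of invertible gauge matrices $W_k$ intertwining $A_i\otimes\1$ and $B_i\otimes\1$, a Kronecker-structure lemma showing that the consecutive ratio $T=W_{k+1}^{-1}W_k$ has the form $\1\otimes Z$, and finally a comparison of block lengths $L_0$ and $L_0+1$ to force $Z=\1$. You instead stay entirely inside the algebra $\mathcal{M}_D$: the trace-duality reformulation of injectivity as spanning (correct, and your unproved claim that spanning propagates to all lengths $m\ge L_0$ follows from the absence of a common left kernel of the $A_i$), the ring-cutting construction of the position-independent linear maps $\Phi_k$ with $\Phi_k(B_{i_1}\cdots B_{i_k})=A_{i_1}\cdots A_{i_k}$, the multiplicativity $\Phi_{k+\ell}(XY)=\Phi_k(X)\Phi_\ell(Y)$, centrality of $S=\Phi_{L_0}(\1)$, and Skolem--Noether innerness of automorphisms of a matrix algebra to obtain a single similarity $R$. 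The only ingredient shared with the paper is the final trick of comparing lengths $L_0$ and $L_0+1$ to pin down the residual freedom and descend to single sites. Your route buys self-containedness (no canonical form, no SVD chain, no Horn--Johnson uniqueness input) and a slightly better count: your longest block is $2L_0+1$ with a complementary arc of length $\ge L_0$, so $N\ge 3L_0+1$ already suffices, strictly better than the paper's $4L_0+1$; the paper's route, in turn, reuses standard MPS technology whose intermediate objects (the site-dependent $W_k$) connect to the general, non-injective theory.

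One point deserves flagging, though it is shared with the paper rather than a defect of your argument: what you actually derive is $A_i=\gamma\, R B_i R^{-1}$ with $\gamma^N=1$, and this residual root of unity cannot in general be removed. Indeed, for $A_i=-B_i$ with $N$ even the two states coincide, yet no invertible $R$ satisfies $-B_i=RB_iR^{-1}$ for all $i$: conjugation by $R$ would act as $(-1)^{L_0}\,\mathrm{id}$ and as $(-1)^{L_0+1}\,\mathrm{id}$ on $\mathcal{M}_D$ via the two spanning lengths, a contradiction. So the theorem as literally stated holds only up to such a phase. The paper's proof silently sets the analogous scalar to one at the step claiming $Z^{L_0}=\1=Z^{L_0+1}$ (summing coefficients that produce $\1$ on the $B$ side a priori produces only a scalar multiple of $\1$ on the $A$ side), whereas you isolate $\gamma$ explicitly before absorbing it into a redefinition of $B$; your treatment of this point is, if anything, the more careful of the two.
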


\begin{proof}
We can obtain an OBC representation by noticing that

\begin{equation*}
|\psi_A \> = \sum_{i_1,\ldots ,i_N=1}^{d}a_{i_1}^{[1]}(A_{i_2}\otimes \1 )
\cdots (A_{i_{N-1}} \otimes \1 )a_{i_N}^{[N]}|i_1 \cdots i_N\>
\end{equation*}
where $a_i^{[1]}$ is the vector that contains all the rows of $A_i$
and $a_i^{[N]}$ is the vector that contains all the columns in
$A_i$. Doing the same with the B's
\begin{equation*}
|\psi_B\>=\sum_{i_1,\ldots,i_N = 1}^{d}b_{i_1}^{[1]}(B_{i_2} \otimes \1 )
\cdots (B_{i_{N-1}}\otimes \1 )b_{i_N}^{[N]}|i_1 \cdots i_N\>
\end{equation*}

Getting from them an OBC canonical representation (with matrices
$C$'s for the $A$'s and matrices $D$'s for $B$'s) as in
\cite[Theorem 2]{PVWC07} we obtain $Y_j^1$, $Z_j^1$, $Y_j^2$ and $Z_j^2$  with
$Y_j^1Z_j^1=\1$, $Y_j^2 Z_j^2=\1$ such that

\begin{gather*}
C_i^{[1]}=a_i^{[1]} Z_1^1, C_i^{[N]}=Y^1_{N-1}a_i^{[N]}\\
C_i^{[m]}=Y^1_{m-1} (A_i\otimes \1)  Z^1_m \text{ for } 1 < m < N\\
D_i^{[1]}=b_i^{[1]} Z_1^2, D_i^{[N]}=Y^2_{N-1}b_i^{[N]}\\
D_i^{[m]}=Y^2_{m-1} (B_i\otimes \1) Z^2_m \text{ for } 1 < m < N
\end{gather*}

Besides using theorem 3.1.1' in \cite{Horn}, we get that any two OBC
canonical representations are related by unitaries, that is, there
exists $V_1,...,V_{N-1}$ such that

\begin{gather*}
C_i^{[1]}V_1= D_i^{[1]}, V_{N-1}C_i^{[N]}= D_i^{[N]}\\
V_{j-1}C_i^{[j]}V_j= D_i^{[j]} \text{ for } 1<j<N
\end{gather*}

Now, by using injectivity as in \cite[Theorem 6]{PVWC07}, we know
that $Z^r_s, Y^r_s$ are invertible for $r=1,2$ and $L_0\le s\le
N-L_0$ and so are the $D^2\times D^2$ matrices $W_k$ defined as

\begin{equation*}
W_{k}=Z^2_{L_0+k}V_{L_0+k}Y_{L_0+k}^1\;\; k=0,...,2L_0+1.
\end{equation*}

It is easy to verify that for all $i$,

\begin{equation*}
W_k(A_i\otimes \1)W_{k+1}^{-1}=(B_i\otimes\1) \text{ for } 0\leq k\leq 2L_0.
\end{equation*}

In fact, by grouping and denoting $A_{I_l}=A_{i_1} \cdots A_{i_l}$,
we have that

\begin{equation}\label{eq1}
W_m(A_{I_{n-m}}\otimes \1) W_{n}^{-1} = B_{I_{n-m}}\otimes \1
\end{equation}
for every $0 \leq m < n \leq 2L_0 + 1$ and every multi-index
$I_{n - m}$. Then for suitable values of $m$ and $n$, we obtain
\begin{equation*}
W_{k+1}^{-1}W_k(A_{I_{2L_0-k}}\otimes \1)W_{2L_0}^{-1}W_{2L_0+1}=A_{I_{2L_0-k}}\otimes \1
\end{equation*}
for every $0\leq k\leq L_0$.

As we are in an injective region for every $k$, the matrix could be
taken as the identity and then we get that
\begin{equation}\label{eq2}
T := W_{k+1}^{-1}W_k = W_{2L_0+1}^{-1}W_{2L_0}
\end{equation}
for every $0\leq k\leq L_0$.

Therefore, $T(X\otimes \1)T^{-1}=(X\otimes \1)$ for every $X$. Let
us make use of the following lemma, which is a consequence of
\cite[Theorem 4.4.14]{Horn}:

\begin{lem} \label{lemma0}
If $B$, $C$ are squares matrices of the same size $n\times n$, the
space of solutions of the matrix equation
\begin{equation*}
W(C\otimes \1)=(B\otimes\1)W
\end{equation*}
is $S\otimes M_n$ where $S$ is the space of solutions of the
equation $XC=BX$.
\end{lem}

With this at hand it is easy to deduce that $T=\1\otimes Z$ so that
\begin{equation*}
W^{-1}_{L_0}W_{0} = W^{-1}_{L_0}W_{L_0-1}W^{-1}_{L_0-1} \cdots W_{0} = (\1\otimes Z)^{L_0}
\end{equation*}
from where we obtain
\begin{equation*}
W^{-1}_{L_0} = (\1\otimes Z^{L_0})W_{0}^{-1}
\end{equation*}
and in the same way
\begin{equation*}
W^{-1}_{L_0+1} = (\1\otimes Z^{L_0+1})W_{0}^{-1}\; .
\end{equation*}

Replacing in Eq. (\ref{eq1})

\begin{equation*}
\begin{aligned}
(B_{I_{L_0}}\otimes \1) &=W_{0}(A_{I_{L_0}}\otimes \1)W^{-1}_{L_0}\\
&=W_{0}(A_{I_{L_0}}\otimes Z^{L_0})W_{0}^{-1}\\
(B_{I_{L_0+1}}\otimes \1) &=W_{0}(A_{I_{L_0+1}}\otimes \1)W^{-1}_{L_0+1}\\
&=W_{0}(A_{I_{L_0+1}}\otimes Z^{L_0+1})W_{0}^{-1}
\end{aligned}
\end{equation*}

By using injectivity of $B_{I_{L_0}}$ and $B_{I_{L_0+1}}$, we
can sum with appropriate coefficients to obtain $\1$ on the LHS.
Then, we get that  $Z^{L_0}=\1=Z^{L_0+1}$, which gives $Z=\1$ and
hence $B_i\otimes \1 = W_0(A_i \otimes \1)W_0^{-1}$ for all $i$.

By \cite[Theorem 4 and Proposition 1]{PVWC07}, we can assume w.l.o.g.
that $\sum_{i}A_iA_i^\dagger=\1$ and that $\sum_iB_i^\dagger \Lambda
B_i=\Lambda$ for a full-rank diagonal matrix $\Lambda$. The proof
follows straightforwardly from here as in \cite[Theorem 6]{PVWC07}.
\end{proof}
\end{section}


\begin{section}{The canonical form for PEPS}

In this section, we show that Theorem \ref{MPS} holds in any spatial
dimension: two injective representations of the same PEPS are
related by the trivial gauge freedom in the bonds (Fig.
\ref{figtheorem}).

We prove the result in 2D by using the result in 1D, and the argument
can be generalized to larger spatial dimensions by induction. We will initially consider a square
lattice, but we show at the end of the section how to
extend the result to the honeycomb lattice.

\begin{thm}\label{thm:main}
Let $|\psi_A\>$ and $|\psi_B\>$ be two PEPS
in a $L\times N$ square lattice given by tensors $A=\sum_{i;ldru}
A^i_{ldru}|i\>\<ldru|$, $B=\sum_{i;ldru} B^i_{ldru}|i\>\<ldru|$ with
the property that for a region of size smaller than $L/5\times N/5$
both PEPS are injective. Then $|\psi_A\>=|\psi_B\>$ if and only if
there exist invertible matrices $Y,Z$ such that $A^i(Y\otimes Z
\otimes Y^{-1}\otimes Z^{-1})=B^i$ for all $i$ (Fig.
\ref{figtheorem}). Moreover $Y$ and $Z$ are  unique.
\end{thm}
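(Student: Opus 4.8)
The plan is to reduce the two-dimensional statement to the one-dimensional Theorem~\ref{MPS} by blocking the lattice, applying the latter \emph{twice}: once horizontally, to obtain a gauge on the vertical cuts, and once vertically, to localize it and to produce the complementary gauge. The \emph{if} direction is immediate: when $A^i(Y\otimes Z\otimes Y^{-1}\otimes Z^{-1})=B^i$, every bond of the lattice is shared by two neighboring tensors carrying inverse factors ($Y$ against $Y^{-1}$, resp.\ $Z$ against $Z^{-1}$), and since a bond is contracted against the maximally entangled state $|\Omega\>=\sum_\alpha|\alpha\alpha\>$ these factors cancel, so that $|\psi_A\>=|\psi_B\>$. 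The content is the \emph{only if} direction, which I now sketch.

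First I would block the lattice into $N$ columns. Contracting each column along its vertical bonds (periodic boundary conditions) produces a translationally invariant MPS with $N$ sites, physical dimension $d^L$, and bond dimension $D_1^L$, with matrices $A^{\mathrm{col}}_I$ and $B^{\mathrm{col}}_I$ indexed by the column configuration $I=(i_1,\dots,i_L)$. Injectivity is the point where the factor $5$ enters: since the PEPS is injective on regions smaller than $L/5\times N/5$, it is injective on any larger region, in particular on cylinder blocks of at most $N/5$ full columns; hence the column MPS is injective with $L_0^{\mathrm{col}}<N/5$, and therefore $N>4L_0^{\mathrm{col}}+1$. Theorem~\ref{MPS} then applies and yields an invertible $D_1^L\times D_1^L$ matrix $\mathcal R$ with $A^{\mathrm{col}}_I=\mathcal R\,B^{\mathrm{col}}_I\,\mathcal R^{-1}$ for every $I$.

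The heart of the argument, and the step I expect to be the main obstacle, is to show that the \emph{global} operator $\mathcal R$, a priori arbitrary on the $D_1^L$-dimensional horizontal-bond space of a column, is in fact \emph{local}, namely $\mathcal R=Y^{\otimes L}$ for a single invertible $D_1\times D_1$ matrix $Y$, and to extract simultaneously the vertical gauge $Z$. For this I would read a single column as a translationally invariant MPS in the \emph{vertical} direction, with ``physical'' index $(i,l,r)$ at each height and bond dimension $D_2$; by the same monotonicity this vertical MPS is injective with $L_0^{\mathrm{vert}}<L/5$, so $L>4L_0^{\mathrm{vert}}+1$. The relation $A^{\mathrm{col}}_I=\mathcal R\,B^{\mathrm{col}}_I\,\mathcal R^{-1}$ says precisely that the two vertical MPS coincide once the horizontal physical indices are acted on by $\mathcal R$ (on the $l$-legs) and $\mathcal R^{-1}$ (on the $r$-legs). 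Feeding this into the vertical version of Theorem~\ref{MPS} together with its algebraic core, the $S\otimes M_n$ commutant structure of Lemma~\ref{lemma0}, the tensor-product structure of the horizontal bonds forces $\mathcal R$ to factor as $Y^{\otimes L}$ and produces an invertible $D_2\times D_2$ matrix $Z$ relating the vertical bonds. Collecting the $Y$ acting on $l,r$ and the $Z$ acting on $d,u$ gives exactly $A^i(Y\otimes Z\otimes Y^{-1}\otimes Z^{-1})=B^i$. The difficulty is entirely that $\mathcal R$ couples all $L$ horizontal bonds at once; the vertical injectivity is what rigidifies it into a single site-independent factor.

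Finally, uniqueness follows from injectivity. If $(Y,Z)$ and $(Y',Z')$ both solve the relation with $B=A$, then $\tilde Y=Y'Y^{-1}$ and $\tilde Z=Z'Z^{-1}$ satisfy $A^i(\tilde Y\otimes\tilde Z\otimes\tilde Y^{-1}\otimes\tilde Z^{-1})=A^i$ for all $i$; applying the injective inverse on a suitable region, the induced boundary operator must act trivially, and a Schur-type argument forces $\tilde Y$ and $\tilde Z$ to be scalar multiples of the identity. Since the transformation $Y\otimes Z\otimes Y^{-1}\otimes Z^{-1}$ is manifestly invariant under $(Y,Z)\mapsto(\lambda Y,\mu Z)$, this is exactly the assertion that $Y$ and $Z$ are unique up to that unavoidable scalar freedom.
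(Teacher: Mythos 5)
Your high-level plan coincides with the paper's (block columns, apply Theorem~\ref{MPS} to get a global gauge $\mathcal R$ on the $D_1^L$-dimensional column bond space, then rigidify $\mathcal R$ into $Y^{\otimes L}$), and your treatment of the easy direction and of uniqueness is fine (your caveat about the rescaling $(Y,Z)\mapsto(\lambda Y,\mu Z)$ is in fact more careful than the paper's bare statement). But the gap sits exactly where you predicted, and the mechanism you propose does not close it. You cannot ``feed'' the relation $A^{\mathrm{col}}_I=\mathcal R\,B^{\mathrm{col}}_I\,\mathcal R^{-1}$ into a vertical application of Theorem~\ref{MPS}: that theorem compares two translation-invariant MPS given by \emph{local} site tensors which generate the same state, whereas here the $B$-side object is obtained from the $B$-column by the \emph{non-local} operator $\mathcal R$ acting jointly on all $L$ left legs (and $\mathcal R^{-1}$ on all right legs). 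Until $\mathcal R$ is known to factorize, $\mathcal R\,B^{\mathrm{col}}_I\,\mathcal R^{-1}$ is not the contraction of identifiable local vertical tensors, so invoking the vertical 1D theorem presupposes precisely the factorization you are trying to prove; the step is circular. Lemma~\ref{lemma0} cannot rescue it either: it describes the solution space of $W(C\otimes\1)=(B\otimes\1)W$ for a fixed ``tensor with identity'' structure and is used inside the 1D proof to identify $T=\1\otimes Z$; it says nothing about an invertible operator conjugating products of matrices across $L$ tensor factors into another such product.

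What the paper does at this point is genuinely different and is the substantive content of the 2D proof: (i) a Schmidt-rank counting argument (Lemma~\ref{lema3}) showing that $\mathcal R$ maps product vectors to product vectors --- injectivity of the blocked $R\times N/5$ and $S\times N/5$ regions makes the boundary families linearly independent, so a non-product image would force Schmidt rank at least $2D^{2N/5}$ against the $D^{2N/5}$ of the injective side, a contradiction; (ii) the purely algebraic Lemma~\ref{lema4}, that an invertible product-preserving map has the form $P_\pi(Y_1\otimes\cdots\otimes Y_L)$; (iii) a separate support-dimension argument showing $P_\pi=\1$; and (iv) translational invariance to make all $Y_i$ equal. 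None of these steps appears, even in outline, in your proposal. Moreover, even granting $\mathcal R=Y^{\otimes L}$, you skip how $Z$ is produced and localized: the paper absorbs $Y$ into $A$, sandwiches the vertical virtual legs with product vectors $|n\>^{\otimes\cdot}$, $\<m|^{\otimes\cdot}$, applies Theorem~\ref{MPS} \emph{horizontally} to obtain matrices $Z_{mn}$, proves independence of $m,n$ by summing to deltas, and then needs the final patching of four overlapping injective regions (Fig.~\ref{fig5}) to pass from relations on whole rows/columns to the single-tensor identity $A^i(Y\otimes Z\otimes Y^{-1}\otimes Z^{-1})=B^i$. Finally, the monotonicity of injectivity you invoke twice (``injective on a small region implies injective on any larger one'') is not automatic; it is the paper's Lemma~\ref{lema2} and itself requires a proof. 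As it stands, your argument establishes only the reduction to the column MPS and the existence of the global gauge $\mathcal R$; the rigidity step that is the theorem's actual content is asserted rather than proved.
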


The uniqueness is a simple consequence of injectivity. For the existence part, let us split the proof into a sequence of lemmas, in order to make
it clearer.

\begin{lem}\label{lema2}
If a region of size $H\times K$ of a translational invariant PEPS is
injective, the same happens for a region of size $(H+1)\times K$
(and $H\times (K+1)$)
\end{lem}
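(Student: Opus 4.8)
The plan is to prove the statement for growing the vertical dimension, from $H\times K$ to $(H+1)\times K$; the case $H\times(K+1)$ follows by the identical argument after exchanging the roles of the two directions. First I would view the $(H+1)\times K$ region $R$ as the disjoint union of its bottom $H$ rows, call this region $S$, and a single top row. The $K$ virtual bonds joining them form an interface $u$ of dimension $D^K$, which is contracted in $A^R$. By hypothesis $S$ (being $H\times K$) is injective, and by translational invariance so is the region $S'$ consisting of the top row together with the $H-1$ rows just below it, which is again $H\times K$. I will use throughout that injectivity of a region is equivalent to the contracted tensor, read as a linear map from the boundary virtual space to the physical space, being injective, hence to the existence of a left inverse.

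The first step uses injectivity of $S$. Writing a hypothetical boundary vector annihilated by $A^R$ as $W=\sum_k W_k^{\mathrm{out}}\otimes W_k^{\mathrm{row}}$ with $\{W_k^{\mathrm{out}}\}$ linearly independent — here $W^{\mathrm{out}}$ lives on $\partial S$ minus the interface and $W^{\mathrm{row}}$ on the top, left and right legs of the added row — I apply the left inverse $P_S$ of $A^S$ to the physical legs of the bottom $H$ rows. Since $P_S A^S=\1$ recovers the whole boundary of $S$, including the interface $u$, the condition $A^R(W)=0$ becomes, after expanding in a basis $\{e_t\}$ of $u$,
\[
\sum_{k,t}\big(W_k^{\mathrm{out}}\otimes e_t\big)\otimes A^{\mathrm{row}}(e_t\otimes W_k^{\mathrm{row}})=0 .
\]
Linear independence of the vectors $W_k^{\mathrm{out}}\otimes e_t$ then forces $A^{\mathrm{row}}(e_t\otimes W_k^{\mathrm{row}})=0$ for every $k$ and every interface value $t$.

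The remaining, and main, difficulty is to conclude $W_k^{\mathrm{row}}=0$, that is, that the single added row is itself injective when read as a map from its top/left/right legs to its physical and bottom legs. This does \emph{not} follow from injectivity of $S$, which sits below the row and constrains it in no way; this is exactly where translational invariance enters, through $S'$. I would argue by contraposition: if some nonzero $u_0$ on the top/left/right legs satisfied $A^{\mathrm{row}}(t\otimes u_0)=0$ for all interface values $t$, then feeding $u_0$ into those legs of the top row of $S'$ (and anything into the rest of $\partial S'$) would make $A^{S'}$ vanish on a nonzero boundary vector, contradicting injectivity of $S'$. Hence the row is injective in the required sense, so $W_k^{\mathrm{row}}=0$ for all $k$, giving $W=0$ and the injectivity of $A^R$.

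I expect the only delicate point to be the bookkeeping of which legs are boundary inputs and which are contracted: because the left and right legs of each row are boundary legs, one cannot simply fold the region into a vertical MPS and quote the one-dimensional monotonicity of injectivity. The essential idea I would emphasize is the use of the translated region $S'$ to transfer injectivity onto the newly added row, precisely the information that the region $S$ lying below it cannot supply.
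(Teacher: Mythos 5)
Your proposal is correct and is essentially the paper's own argument in dual form: the paper likewise splits the enlarged region into an $H\times K$ block plus a single row, first derives the mixed injectivity of that row (interface and physical legs as inputs, remaining boundary as output) from injectivity of an $H\times K$ region --- this is exactly your contraposition step via the translate $S'$, which the paper performs with the split $T=H-1$ of Fig.~\ref{fig8} --- and then glues this row property to the injective block. The only cosmetic differences are that the paper attaches the new row at the bottom rather than the top and phrases everything in terms of explicit coefficients $\alpha,\beta$ synthesizing boundary deltas (the left-inverse formulation) instead of your equivalent kernel-vector language with $P_S$.
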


\begin{proof}
Note that the region of size $1 \times K$ is injective when the
upper and the physical system are considered as inputs (left picture
of Fig. \ref{fig8}). To see this, take an injective region $S$ of
dimension $H \times K$ and split it into two subregions, as in the
right picture of Fig. \ref{fig8} with $T=H-1$. For simplicity in the
rest of the proof we gather the indexes $u^1$, $u^2$, $u^3$ and
$d^1$, $d^2$, $d^3$ and call them $u$ and $d$ respectively.

\begin{figure}[h!]
 \centering
 \includegraphics[width=7cm]{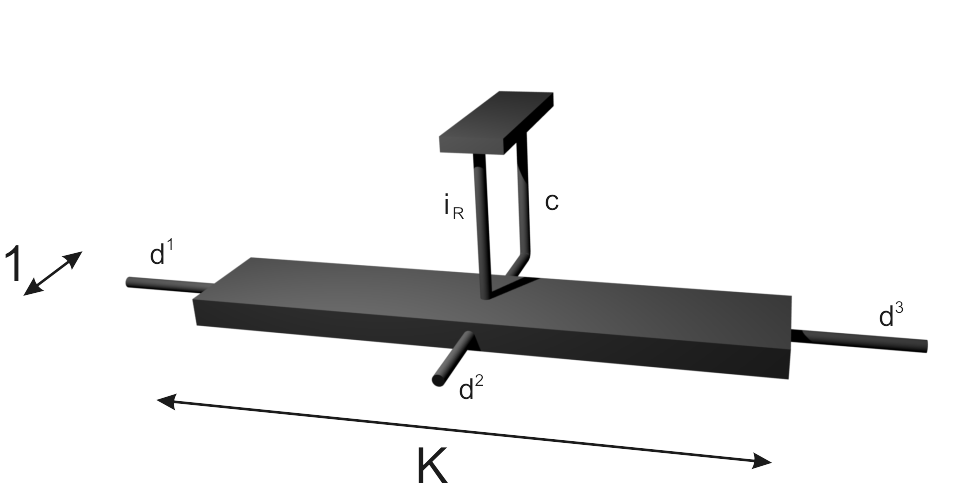}\\
 \includegraphics[width=7cm]{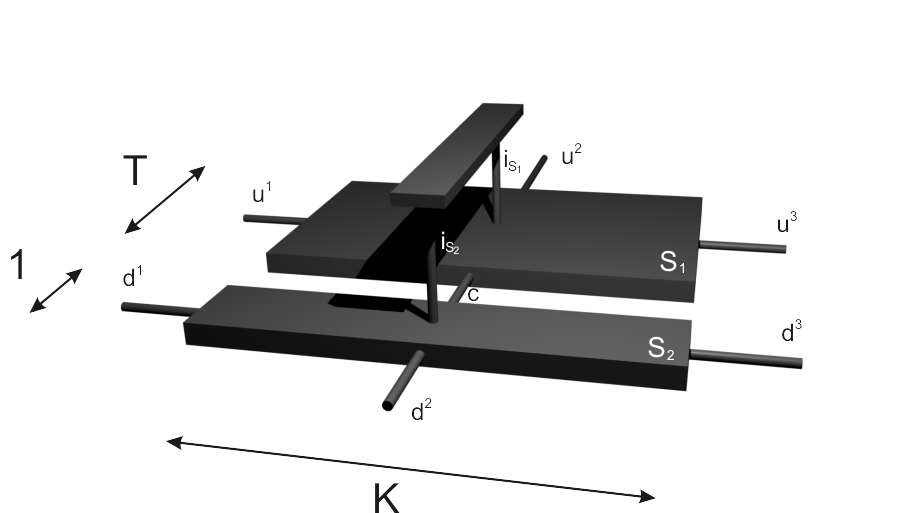}
 \caption{\footnotesize \emph{This figure represents the argument used to prove Lemma \ref{lema2}.}}
 \label{fig8}
\end{figure}

Using injectivity of the region $S$, there exists
$\{\alpha_{i_{S_1},j_{S_2}}^{u_0, d_0}\}_{i_{S_1},j_{S_2}}$ for any
$u_0,d_0$ such that
\begin{equation*} \sum_{c, i_{S_1},j_{S_2}}
\alpha_{i_{S_1},j_{S_2}}^{u_0, d_0} A_{i_{S_1}}^{u, c}
A_{j_{S_2}}^{c, d}=\delta_{u,u_0} \delta_{d,d_0}
\end{equation*}
Taking $u=u_0$ we get
\begin{equation*}
\sum_{c,i_{S_1},j_{S_2}} \alpha_{i_{S_1},j_{S_2}}^{u_0,
d_0} A_{i_{S_1}}^{u_0, c} A_{j_{S_2}}^{c, d}= \delta_{d,d_0}
\end{equation*}

Now, if we take a region $S$ of size $(H+1) \times K$ and divide it
as in Fig. \ref{fig8} with $T=H$, there exists
$\{\beta_{c, j_{S_2}}^{ d_0}\}_{c, j_{S_2}}$ for any $d_0$ such that

\begin{equation*}
\sum_{c,j_{S_2}} \beta_{c, j_{S_2}}^{d_0} A_{j_{S_2}}^{c, d}=
\delta_{d,d_0}
\end{equation*}
By using injectivity of a region
of dimension $H \times K$, there exists
$\{\alpha_{i_{S_1},j_{S_2}}^{u_0,c_0,d_0}\}_{i_{S_1}}$ such that
\begin{equation*}
\sum_{i_{S_1}} \alpha_{i_{S_1},j_{S_2}}^{u_0, c_0, d_0}
A_{i_{S_1}}^{u, c} = \beta_{c_0, j_{S_2}}^{ d_0}\delta_{u,u_0}
\delta _{c,c_0}
\end{equation*}
By putting both equalities together, we find
\begin{gather*}
\sum_{c,c_0,i_{S_1},j_{S_2}} \alpha_{i_{S_1},j_{S_2}}^{u_0,c_0, d_0}
A_{i_{S_1}}^{u, c} A_{j_{S_2}}^{c, d} =\\ =\sum_{c,c_0 j_{S_2}}
\beta_{c_0, j_{S_2}}^{ d_0}\delta_{u,u_0} \delta _{c,c_0}
A_{j_{S_2}}^{c, d}=\\
=\sum_{c_0 j_{S_2}} \beta_{c_0, j_{S_2}}^{ d_0}\delta_{u,u_0}
A_{j_{S_2}}^{c_0, d}=\delta_{u,u_0} \delta _{d,d_0}
\end{gather*}
and so $S$ is an injective region.
\end{proof}

This allows us to reduce the 2D case to the 1D case by grouping all
the tensors in a column. The 1D case (Theorem \ref{MPS}) ensures
that there is a global invertible matrix $Y$ which verifies the
equality in Fig. \ref{fig1}. Now

\begin{figure}[h!]
 \centering
  \includegraphics[width=0.5\textwidth]{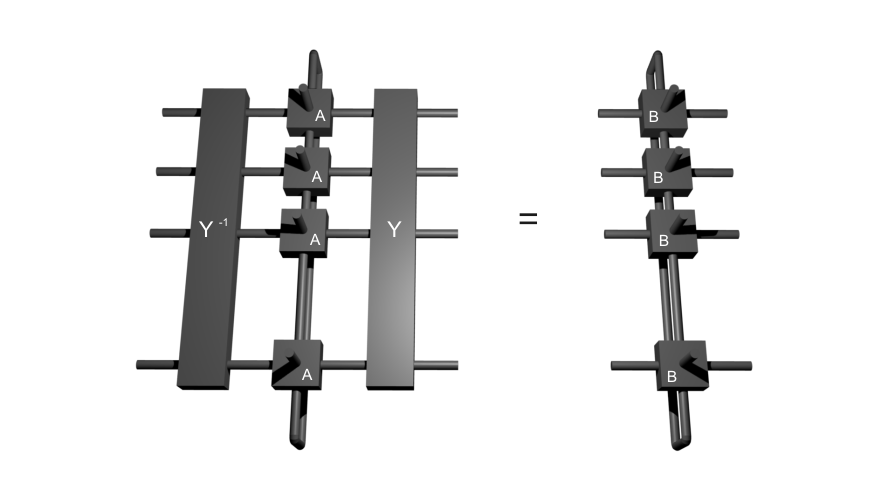}
  \caption{\footnotesize \emph{Translational invariance
  and  injectivity allow to reduce the 2D case to the 1D case. }}
 \label{fig1}
\end{figure}

\begin{lem}\label{lema3}
$Y$ maps product vectors into product vectors.
\end{lem}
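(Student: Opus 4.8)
The plan is to understand precisely what $Y$ is and what ``maps product vectors into product vectors'' means geometrically, and then to exploit the tensor structure of a PEPS column. The matrix $Y$ arises from applying the 1D canonical form (Theorem \ref{MPS}) to the column obtained by grouping the PEPS tensors vertically; it is the global gauge matrix relating the column-MPS built from the $A$-tensors to the one built from the $B$-tensors. Because each site of this column-MPS carries \emph{two} virtual legs from the original 2D lattice (a left leg of dimension $D_1$ and a right leg of the same dimension), the bond Hilbert space on which $Y$ acts factorizes as a tensor product $\C^{D_1}\otimes\C^{D_1}$, one factor for the ``incoming from the left'' virtual index and one for the ``outgoing to the right.'' The statement to prove is that $Y$, as an operator on $\C^{D_1}\otimes\C^{D_1}$, sends simple tensors $\ket{\phi}\otimes\ket{\chi}$ to simple tensors.

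First I would set up the equation that $Y$ satisfies. The equality depicted in Fig.~\ref{fig1} says that conjugating (or intertwining) the grouped $A$-column tensor by $Y$ on the horizontal bonds yields the $B$-column tensor. Writing this out index by index, the horizontal (left--right) virtual legs of the column are acted on by $Y$, while the vertical legs inside the column have already been matched by the 1D argument. I would then bring in the PEPS tensor at a single site and contract a full horizontal \emph{row} of the lattice. The key observation is that the left and right virtual spaces of the whole lattice are genuinely independent: a product vector $\ket{\phi}\otimes\ket{\chi}$ on the doubled bond corresponds to an \emph{unentangled} boundary condition across the vertical cut, i.e.\ a left-boundary vector and a right-boundary vector chosen independently.

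The main technical step, and where I would concentrate the argument, is to translate the PEPS geometry into an algebraic constraint forcing $Y$ to respect the product structure. The idea is that a product boundary vector on the cut produces, through the contraction of the row, a state that again factorizes as (left part) $\otimes$ (right part), because the physical indices on the two sides of a vertical cut are only linked through the single horizontal bond being acted upon. Using injectivity of the relevant rectangular regions (guaranteed, after Lemma \ref{lema2}, for all sufficiently large sizes including the $1\times K$ and $H\times 1$ strips), one can recover the boundary vectors from the physical state uniquely. Since $Y$ intertwines the $A$- and $B$-representations and both are injective, $Y$ must carry product boundary conditions of the $A$-PEPS to product boundary conditions of the $B$-PEPS, which is exactly the assertion that $Y$ maps product vectors to product vectors.

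The hard part will be making the factorization across the vertical cut fully rigorous rather than merely pictorial: one must verify that the injectivity inherited from Lemma \ref{lema2} genuinely decouples the two sides of the bond, so that no hidden entanglement between the left and right factors can be created by $Y$. Concretely, I expect the obstacle to be ruling out the possibility that $Y$ takes a product vector to an entangled vector whose entanglement is invisible at the level of the contracted physical state; this is precisely where the injectivity of the thin strips (ensuring the map from boundary conditions to physical states is one-to-one) does the essential work, and I would phrase the final step as: if $Y(\ket{\phi}\otimes\ket{\chi})$ were entangled, contracting it into the injective strip would yield, on the $B$-side, a physical state that cannot arise from any product boundary condition, contradicting the equality $\ket{\psi_A}=\ket{\psi_B}$.
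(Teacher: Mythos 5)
Your proposal misidentifies the object the lemma is about, and this derails the whole argument. The matrix $Y$ in Lemma \ref{lema3} comes from applying Theorem \ref{MPS} to the MPS whose ``sites'' are entire columns of the lattice; its bond space is therefore the $L$-fold tensor product $(\mathbb{C}^{D})^{\otimes L}$ of the horizontal virtual legs, \emph{one factor per row}, and ``product vector'' means $\otimes_{i=1}^{L}|x_i\rangle$ across the rows. It is not an operator on a doubled single bond $\mathbb{C}^{D_1}\otimes\mathbb{C}^{D_1}$ with ``incoming-left'' and ``outgoing-right'' factors: in the gauge relation $A_i = Y B_i Y^{-1}$ the left and right column bonds are acted on by $Y$ and $Y^{-1}$ \emph{separately}, not jointly by one $Y$. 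This matters because the subsequent steps of the paper (Lemma \ref{lema4}: $Y=P_\pi(Y_1\otimes\cdots\otimes Y_L)$; then $P_\pi=\mathbbm{1}$; then $Y_i=Y$ by translational invariance) only make sense for the $L$-fold row factorization, and the bipartitions relevant to the proof are cuts of the column into consecutive groups of rows of size $\ge L/5$.

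Even granting a corrected setup, your contradiction step is a non sequitur. Injectivity says the map from boundary conditions to physical states is one-to-one; it does not follow that an entangled boundary condition produces a state that ``contradicts'' anything. Indeed, once the relation of Fig. \ref{fig1} holds, the $B$-block contracted with the (possibly entangled) boundary $Y(\otimes_i|x_i\rangle)$ is \emph{equal} to the $A$-block contracted with the product boundary, so no state arises ``that cannot arise from any product boundary condition''---the equality $|\psi_A\rangle=|\psi_B\rangle$ concerns the closed state and imposes no such constraint by itself. The missing idea, which is the actual content of the paper's proof, is a quantitative Schmidt-rank count: one groups $N/5$ columns, sandwiches both sides of the column identity with the same product vectors, and compares the Schmidt rank of the resulting physical state across an $R$--$S$ cut of consecutive rows of size $\ge L/5$. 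On the $B$-side injectivity forces the rank to be exactly $D^{2N/5}$ (the number of bonds crossing the cut), while on the $A$-side, if $Y(\otimes_i|x_i\rangle)$ had Schmidt rank $\ge 2$ across that cut, the linear independence of the vectors $\{\sum_i A'^i_{abcd}|i\rangle\}_{abcd}$ guaranteed by injectivity forces rank at least $2D^{2N/5}$, a contradiction. This counting is precisely the mechanism that rules out the ``hidden entanglement'' you correctly flag as the hard part but leave unresolved; together with the (also missing) reduction ``product across every such bipartition of consecutive rows implies product'', it is what closes the proof.
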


We will show that $Y$ maps any product vector to a vector with the
property:

(*) {\it It is a product in {\it any} bipartition R-S, for regions R
and S of consecutive spins and size $\ge L/5$.}

Since any vector with property (*) is trivially a product vector,
this would finish the proof. So let us take a product
$\otimes_i|x_i\>$  and assume that this product is mapped by $Y$
into a vector that can be written in some orthonormal bases as $Y
(\otimes_i |x_i \>) = \sum_{r=1,2,\ldots} \beta_r |v_rw_r\>$ in a
partition R-S for regions of consecutive spins and size $\ge L/5$. For the same bipartition, we
may write $\otimes_i\<x_i|Y^{-1}=\sum_{r=1,2,\ldots} \alpha_{r}
\<v'_{r}w'_r|$, which could be a product. We group $N/5$ columns,
sandwich with $\otimes_i|x_i\>$ in Fig \ref{fig1} and analyze the
Schmidt rank between the two physical $R\times N/5$ and $S\times
N/5$  systems in both the right and left part of Fig. \ref{fig1}. It
clearly gives $D^{2N/5}$ in the RHS by using injectivity.  By
performing the changes of bases $|r\>\mapsto |v_r\>$ and
$|r\>\mapsto |w_r\>$ (and the same for the primes) to the tensors
$A^{R\times N/5}$ and $A^{S\times N/5}$ in the LHS, it gives new
tensors $A'$ and $A''$ for which we get

\begin{equation*}
\sum_{abcd}\alpha_a{\beta}_c
[\sum_{i}A'^i_{abcd}|i\>][\sum_jA''^j_{adcb}|j\>]
\end{equation*}

By means of injectivity, we know that the set
$\{\sum_{i}A'^i_{abcd}|i\>\}_{abcd}$ is linearly independent (and
the same for $A''$). This means that the Schmidt rank of the LHS is
at least $2D^{2N/5}$, which is the desired contradiction.

The following three lemmas specify the form of $Y$:
\begin{lem} \label{lema4}
If $Y$ is invertible and takes products to products it is of the
form $P_{\pi} (Y_1\otimes\cdots\otimes Y_L)$ where $P_\pi$
implements a permutation $\pi$ of the Hilbert spaces.
\end{lem}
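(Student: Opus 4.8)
The plan is to recover the permutation $\pi$ and the local maps $Y_1,\dots,Y_L$ from the way $Y$ acts on coordinate \emph{lines}, using one elementary separability fact throughout. Call a subspace $W\subseteq\mathcal H_1\otimes\cdots\otimes\mathcal H_L$ a \emph{product subspace} if every nonzero vector in it is a product vector. The workhorse I would establish first is the sub-lemma: \emph{if $\dim W\ge 2$ and $W$ is a product subspace, then its vectors agree in every tensor factor but one}, i.e. there is a single index $m$ and fixed $|z_j\>\in\mathcal H_j$ $(j\ne m)$ with $W\subseteq(\bigotimes_{j<m}|z_j\>)\otimes\mathcal H_m\otimes(\bigotimes_{j>m}|z_j\>)$. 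For a bipartition $\mathcal H_A\otimes\mathcal H_B$ this reduces to the statement that if $|a\>|b\>+|a'\>|b'\>$ is again a product then $|a\>\parallel|a'\>$ or $|b\>\parallel|b'\>$ (otherwise the associated two-term operator has rank $2$); the multipartite case follows by induction, grouping one factor against the rest.

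Next I would apply this to $Y$. Fixing basis vectors in all sites $j\ne k$, the image under $Y$ of the \emph{$k$-line} $(\bigotimes_{j\ne k}|e^{(j)}_{i_j}\>)\otimes\mathcal H_k$ is, by linearity together with Lemma \ref{lema3}, a product subspace of dimension $\dim\mathcal H_k\ge 2$. By the sub-lemma it is supported on a single output factor, which defines an index $\pi(k,\mathbf i)$, where $\mathbf i$ records the frozen background. I would then show $\pi(k,\mathbf i)$ is independent of $\mathbf i$: changing one background slot gives two $k$-lines whose base points, and whose generic points, are images under $Y$ of inputs differing in a single site, so by the separability fact each such pair differs in exactly one output factor; chaining these one-factor relations around the resulting square, and using that the $k$-line genuinely moves (since $Y$ is injective and $\dim\mathcal H_k\ge2$), forces the two active factors to coincide.

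This produces a well-defined assignment $\pi$ on sites, which is a bijection of the (equal-dimensional) factors, since two sites sharing an active factor would squeeze a $k$-line and a $k'$-line into a common one-dimensional-freedom subspace, contradicting injectivity. Absorbing $P_\pi$ by relabelling output factors, I may assume each $k$-line maps into factor-$k$ variations; freezing the other sites to basis vectors then yields, for each $k$, a family of invertible linear maps $\mathcal H_k\to\mathcal H_k$. Evaluating $Y$ on a generic superposition $\bigotimes_k(\sum_i x^{(k)}_i|e^{(k)}_i\>)$ and demanding a product output forces the background-dependent scalars to factorize (the same rank-one argument, now applied to the coefficient array), so these maps are independent of the frozen background and equal to single fixed $Y_k$. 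Hence $Y=P_\pi(Y_1\otimes\cdots\otimes Y_L)$.

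The main obstacle is the middle step: proving that the active output factor $\pi(k,\mathbf i)$ is genuinely independent of the background $\mathbf i$, so that $\pi$ is one well-defined permutation rather than a configuration-dependent assignment. The separability sub-lemma and the final factorization of scalars are routine consequences of the bipartite principle that a sum of two products is a product only when they are parallel in all but one factor; it is the global consistency of $\pi$, where the combinatorics of overlapping lines must be controlled, that I expect to require the most care.
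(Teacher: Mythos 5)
Your proposal is sound and reaches the result by a genuinely different route from the paper. The paper works only in the bipartite case $\mathbb{C}^d\otimes\mathbb{C}^d$, by explicit basis chasing: from the images $Y(|i,1\rangle)$ and $Y(|1,j\rangle)$ it pins down $Y(|i,j\rangle)=c_{ij}|\alpha_i,\beta_j\rangle$ (using invertibility to exclude the degenerate branch), then applies $Y$ to the full superposition $\sum_{ij}|ij\rangle$ and uses Schmidt rank one to factor $c_{ij}=r_is_j$; the $L$-partite case is dispatched with the remark that the argument ``generalizes straightforwardly to the general case by induction.'' You instead argue directly in the multipartite setting via product subspaces and coordinate lines, which has the merit of making explicit exactly what the paper's induction hides: that the active output factor of a $k$-line is independent of the frozen background, and that the resulting site-to-factor assignment is a bijection (your dimension count --- the span of a $k$-line and a $k'$-line through a common point has dimension $2d-1>d$ and cannot be mapped injectively into a single varying factor --- is correct). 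Both proofs ultimately rest on the same elementary separability fact, that a sum of two non-parallel product vectors is a product only if they agree in all factors but one, and your final scalar-factorization step is the multipartite version of the paper's rank-one computation on the coefficient array.

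One concrete caveat on the step you rightly flag as delicate: the two-point ``square'' chase, as sketched, does not quite close. Take two backgrounds differing in one slot $j_0\neq k$, with active factors $m_1\neq m_2$, and for a point $x$ on the $k$-line let $n_x$ denote the unique factor in which its two images (under the two backgrounds) differ. Comparing components in factor $m_1$ --- where the first line moves but the second is frozen --- yields only the disjunction ($n_x=m_1$ or $n_y=m_1$) for any two points $x,y$ with non-parallel images, and symmetrically ($n_x=m_2$ or $n_y=m_2$); the crossed assignment $n_x=m_1$, $n_y=m_2$ survives this, so the square alone gives no contradiction. The repair is cheap: add a third point $w=x+y$ on the line. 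Applying the same disjunctions to the pairs $(x,w)$ and $(y,w)$ then forces $n_w=m_2$ and $n_w=m_1$ simultaneously, a contradiction, whence $m_1=m_2$. With that three-point refinement your argument is complete, and arguably more self-contained than the paper's, since the paper's bipartite-plus-induction scheme itself requires an argument of exactly this consistency type when the factors are grouped.
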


\begin{proof}
We reason for simplicity in the bipartite case---the argument
generalizes straightforwardly to the general case by induction. Let $Y:
\mathbbm{C}^d \otimes \mathbbm{C}^d \lra \mathbbm{C}^d \otimes
\mathbbm{C}^d$ be invertible which takes products to products and
denote $\{|i,j\>\}_{i,j=1,...,d}$ the product basis. Let
$Y(|i,1\>)=|\alpha_i,\beta_i\>$. Take $i_0 \neq i_1 \in \{1,...,d\}$,
then
$Y(|i_0,1\>+|i_1,1\>)=|\alpha_{i_0},\beta_{i_0}\>+|\alpha_{i_1},\beta_{i_1}\>$
is a product and, as $Y$ is invertible, then either I)
$\alpha_{i_0}\propto\alpha_{i_1}$ \& $\beta_{i_0}\not \propto \beta_{i_1}$ or II)
$\alpha_{i_0}\not\propto\alpha_{i_1}$ \& $\beta_{i_0}\propto \beta_{i_1}$, where $\propto$ means proportional to. In
fact, we are always in the same case: if $d=2$ there is only one
case, otherwise take three distinct $i_0,i_1, i_2 \in \{1,...,d\}$
such that $\alpha_{i_0}\not\propto\alpha_{i_1}$ and $\beta_{i_1}\not\propto
\beta_{i_2}$ then we get a contradiction from the fact that
$Y(|i_0,1\>+|i_2,1\>)$ is a product.

The same argumentation can be carried out for the second tensor. We
can therefore assume w.l.o.g. that
\begin{equation*}
Y(|i,1\>)=|\alpha_i,\beta_1\>
\end{equation*}
and
\begin{equation*}
Y(|1,j\>)=|\alpha_1,\beta_j\>
\end{equation*}
In the other case, we just permute the indexes by means of the swap operator $P_{\pi}$.

Let us consider $Y(|i,j\>)=|a_{i,j},b_{i,j}\>$. Now, since
\begin{equation*}
Y(|i,j\>+|i,1\>)=|\alpha_i,\beta_1\>+|a_{i,j},b_{i,j}\>
\end{equation*}
is a product, we obtain that $\alpha_i\propto a_{i,j}$ or $\beta_1\propto b_{i,j}$.
However, the second case is only possible if $j=1$ because of the
invertibility of $Y$, and then $a_{i,j}\propto\alpha_i$. A similar
argumentation over the second tensor gives
$Y(|i,j\>)=c_{i,j}|\alpha_i,\beta_j\>$. Now making $Y(\sum_{i,j=1}^d|ij\>)=\sum_{i,j=1}^d c_{i,j}|\alpha_i,\beta_j\>$ and knowing that the Schmidt rank of the resulting vector must be one, we conclude that the matrix $(c_{i,j})_{i,j}$ has rank one and therefore is of the form $c_{i,j}=r_is_j$ giving $Y(|i,j\>)=|r_i\alpha_i,s_j\beta_j\>$, the desired result.
\end{proof}

Let us now show that $P_\pi$ is the
trivial permutation:
\begin{lem}
$P_\pi=\1$
\end{lem}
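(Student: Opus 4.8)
The plan is to exploit the fact that the permutation $P_\pi$ produced by Lemma \ref{lema4} acts only on the virtual (bond) degrees of freedom of a column, and never on the physical spins, which remain rigidly attached to their rows. Recall that in reducing the $2$D problem to $1$D we grouped the $L$ tensors of a single column, so that $Y$ lives on the horizontal-bond space $(\C^{D_1})^{\otimes L}$ with one tensor factor per row, and Lemma \ref{lema4} writes it as $Y=P_\pi(Y_1\otimes\cdots\otimes Y_L)$ with $P_\pi$ permuting these $L$ row-factors. The relation of Fig. \ref{fig1} reads $\mathcal{A}^{i_1\cdots i_L}=Y\,\mathcal{B}^{i_1\cdots i_L}\,Y^{-1}$, where $\mathcal{A}^{i_1\cdots i_L}$ is the column tensor viewed as a matrix between the left and right horizontal bonds. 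The key observation is that the physical configuration $(i_1,\dots,i_L)$ appears identically on both sides: $Y$ does not touch it.

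First I would reinterpret a column as a genuine $1$D MPS running in the \emph{vertical} direction, whose bond is the vertical bond (dimension $D_2$) and whose effective ``physical'' index at row $k$ is the triple $(i_k,l_k,r_k)$ consisting of the spin together with its own left and right horizontal bonds. Lemma \ref{lema2}, applied vertically, guarantees that this vertical MPS is injective for $L$ large enough. Substituting $Y=P_\pi\bigotimes_k Y_k$ into the relation above and absorbing each $Y_k$ as a harmless local change of basis on the horizontal bonds of row $k$, the content of the identity becomes: the vertical $A$-MPS and a suitably gauged vertical $B$-MPS represent the \emph{same} tensor in the open indices $\{i_k\},\{l_j\},\{r_j\}$, except that on the $B$-side the local tensor sitting at vertical row $k$ carries the horizontal bonds $l_{\pi(k)},r_{\pi(k)}$ while still carrying the spin $i_k$.

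The decisive step is then a localisation argument. Using injectivity of the vertical MPS I would choose boundary conditions on the vertical bonds that single out row $k$; on the $A$-side this extracts a local object depending on $(i_k,l_k,r_k)$ and on the $B$-side one depending on $(i_k,l_{\pi(k)},r_{\pi(k)})$. Since the two full tensors coincide and the spin $i_k$ is anchored to row $k$ on both sides, the set of horizontal-bond indices coupled to $i_k$ must be the same, forcing $\{l_{\pi(k)},r_{\pi(k)}\}=\{l_k,r_k\}$, and since left and right bonds are distinguishable this gives $\pi(k)=k$ for every $k$, i.e. $P_\pi=\1$. I expect the main obstacle to be making rigorous the claim that ``which horizontal bonds sit at which row'' is an invariant of the column tensor rather than an artefact of a chosen MPS decomposition, since a priori the $A$-grouping and the $\pi$-permuted $B$-grouping are two distinct site-groupings of one multi-index tensor. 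This is exactly where injectivity is essential: it pins down the vertical local tensors up to a gauge acting only on the vertical bonds, and such a gauge can never transport a horizontal-bond index from one row to another, so the anchoring of the (unpermuted) spins propagates to their bonds and rules out any nontrivial $\pi$.
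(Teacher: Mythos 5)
There is a genuine gap, in two places. First, your appeal to Lemma \ref{lema2} goes the wrong way: that lemma only \emph{enlarges} injective regions (from $H\times K$ to $(H+1)\times K$), so injectivity of $L/5\times N/5$ regions says nothing about a region of width $1$. A single column, read as a vertical MPS with effective physical index $(i_k,l_k,r_k)$, need not be injective under the hypotheses of Theorem \ref{thm:main} for any height. This is repairable --- and it is exactly what the paper does --- by blocking $N/5$ columns first: for a strip of width $N/5$, vertical windows of height $\ge L/5$ are injective in your sense (left/right bonds as outputs), because this is weaker than the PEPS injectivity of the $L/5\times N/5$ region, which treats those bonds as inputs. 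As written, though, your single-column MPS has no injectivity to offer.

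Second, and more seriously, your decisive step is circular. When $\pi\neq\mathrm{id}$, the $A$-network and the permuted, gauged $B$-network are decompositions of the same overall tensor with \emph{different} groupings of the open indices into vertical sites (site $k$ carries $(i_k,l_k,r_k)$ on one side, $(i_k,l_{\pi(k)},r_{\pi(k)})$ on the other). The rigidity statement you invoke --- injective MPS decompositions are unique up to a gauge on the vertical bonds --- presupposes a \emph{common} site-grouping, so it cannot be used to conclude that the groupings agree: that is precisely the assertion $P_\pi=\1$ being proved. Your localisation sketch (pin the vertical bonds around row $k$ and compare the extracted local objects) does not by itself produce a contradiction either, because with mismatched groupings the extracted object on the $B$-side is a contraction spread over two sites, and one still needs a quantitative invariant that distinguishes the two situations. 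The paper supplies exactly this missing ingredient: assuming $\pi\neq\mathrm{id}$, choose a consecutive bipartition $R$--$S$ (sizes $\ge L/5$) across which $\pi$ transports a factor, block $N/5$ columns, contract all virtual indices but one pair with $|0\rangle$, use injectivity to pick a physical functional $|x\rangle$ with $A|x\rangle=|0\rangle|0\rangle|0\rangle|0\rangle$ on one region, vary the physical vector $|\alpha\rangle$ on the other, and compare the dimension of the span $V$ on the remaining pair of virtual legs: on one side $\dim V=\dim(\mathrm{support}(Y_{R_2}))$, on the other $\dim V=1$ (Fig. \ref{fig2}), a contradiction unless the transported parts are empty. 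Your intuition --- injectivity anchors each spin to its own bonds --- is the right one, but without a dimension-counting argument of this kind the anchoring claim is asserted, not proved.
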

\begin{proof}
Assume that $P_\pi$ is not the identity. Take a $R-S$ bipartition (with sizes $\ge L/5$) such that
$P_\pi$ maps one Hilbert space of $R$ into one of $S$. We block again
$N/5$ columns to get two injective $R\times N/5$ and $S \times N/5$
regions. Denoting by $R_1$ and $S_1$  the parts of the regions that stay within the regions and by $R_2$, $S_2$  the ones that are mapped to the other side, we can
decompose $Y$ as in Fig. \ref{fig2}.

\begin{figure}[h!]
 \centering
  \includegraphics[width=0.5\textwidth]{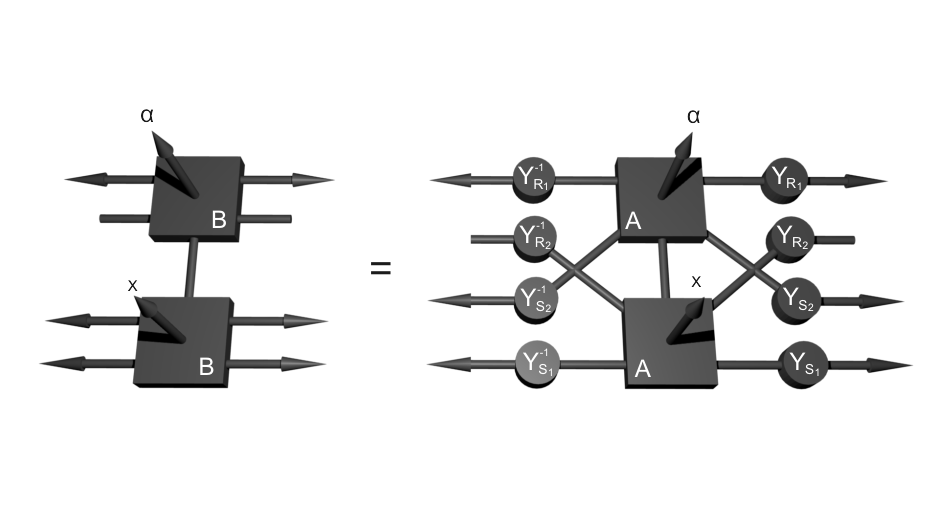}
  \caption{\footnotesize \emph{The cones represent vectors
  multiplying the legs of the tensor. In the virtual space, these vectors
  are $|0\>$, while the vectors in the leg corresponding to the real space are $|x\>$ and $|\alpha\>$ respectively.}}
 \label{fig2}
\end{figure}

Consider now Fig.\ref{fig2}. We contract all virtual indices but the pair in the second row with $|0\rangle$ and the physical indices with $|\alpha\rangle$ and $|x\rangle$ where the latter is chosen such that $A|x\>=|0\>|0\>|0\>|0\>$. Let $V$ be the span in the remaining two virtual indices under the variation of $|\alpha\rangle$.
 It is clear that
in the LHS of Fig.\ref{fig2}, $\dim V=\dim {({\rm support}{(Y_{R_2})})}$, whereas in
the RHS $\dim V=1$, which leads to a contradiction unless $R_2$
and $S_2$ are empty.
\end{proof}

By using both, injectivity and  translational invariance of
the RHS in Fig. \ref{fig1}, we observe that

\begin{lem}
$Y_i =  Y$ for all $i$.
\end{lem}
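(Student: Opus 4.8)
The plan is to deduce the equality of the single‑bond blocks from vertical translational invariance, using the factorization already in hand. At this point we know from the $1$D canonical form (Theorem \ref{MPS}) applied to the column MPS that there is an invertible intertwiner $Y$ with $B^{\mathrm{col}}_{\mathbf i}=Y\,A^{\mathrm{col}}_{\mathbf i}\,Y^{-1}$ for every vertical super‑index $\mathbf i=(i_1,\dots,i_L)$, and from the three preceding lemmas that $Y$ factorizes across the $L$ horizontal bonds of the column as $Y=Y_1\otimes\cdots\otimes Y_L$ with each $Y_k$ an invertible $D\times D$ block. The goal is to show that, up to scalars that can be gauged away, all the $Y_k$ coincide, so that we may denote the common matrix by $Y$.

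The engine is vertical translational invariance together with uniqueness of the intertwiner. First, because the column is injective (Lemma \ref{lema2} propagates injectivity of a small region to the whole column), products of the matrices $A^{\mathrm{col}}_{\mathbf i}$ over a long enough stretch span all of $\mathcal{M}_{D^{L}}$; hence any matrix commuting with every $A^{\mathrm{col}}_{\mathbf i}$ is a multiple of $\1$, and therefore the intertwiner $Y$ is unique up to a nonzero scalar. Now let $S$ be the cyclic shift of the $L$ factors of the horizontal‑bond space $\C^{D}\otimes\cdots\otimes\C^{D}$ induced by the one‑row vertical translation that leaves the state invariant. Since $A$ and $B$ are built on the same bonds, this single combinatorial operator implements the row translation for \emph{both} representations, $A^{\mathrm{col}}_{\sigma(\mathbf i)}=S\,A^{\mathrm{col}}_{\mathbf i}\,S^{-1}$ and likewise for $B$, where $\sigma$ cyclically permutes the super‑index. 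Conjugating the relation $B^{\mathrm{col}}_{\mathbf i}=Y\,A^{\mathrm{col}}_{\mathbf i}\,Y^{-1}$ by $S$ and using these covariances shows that $S\,Y\,S^{-1}$ is again an intertwiner for the same pair of tensors, so by uniqueness $S\,Y\,S^{-1}=\lambda\,Y$ for some scalar $\lambda$.

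I would then read off the conclusion from the factorized form. Conjugation by the cyclic shift permutes the single‑bond blocks, so $S\,Y\,S^{-1}=Y_{\pi(1)}\otimes\cdots\otimes Y_{\pi(L)}$ for the cyclic permutation $\pi$; comparing with $\lambda\,Y_1\otimes\cdots\otimes Y_L$ and using that a tensor product of invertible operators determines its factors up to scalars yields $Y_{k+1}=\mu_k Y_k$ with nonzero $\mu_k$. Hence all the $Y_k$ are mutually proportional. In the gauge $A^i(Y\otimes Z\otimes Y^{-1}\otimes Z^{-1})=B^i$ of Fig. \ref{figtheorem} the scalar attached to the left leg of row $k$ (carrying $Y_k$) is cancelled by its reciprocal on the right leg (carrying $Y_k^{-1}$), so these scalars are immaterial and we may rescale all $Y_k$ to a single matrix $Y$.

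The step I expect to be the main obstacle is the middle one: setting up the vertical‑shift covariance rigorously, namely verifying that one and the same operator $S$ transports both the $A$‑ and the $B$‑column and that the resulting $S\,Y\,S^{-1}$ genuinely satisfies the intertwining relation, together with pinning down uniqueness of the intertwiner from injectivity. Once that covariance is in place, the passage to mutually proportional blocks and the cancellation of the scalars are routine.
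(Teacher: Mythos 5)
Your proof is correct and follows essentially the same route as the paper, which disposes of this lemma in one line by invoking ``injectivity and translational invariance'' of the column relation in Fig.~\ref{fig1}: your shift operator $S$, the uniqueness (up to scalar) of the intertwiner from injectivity of the blocked column MPS, and the comparison of tensor factors are precisely a rigorous rendering of that hint. The scalar-absorption step at the end is also right, since in the conjugation each row's left leg carries $Y_k$ and its right leg $Y_k^{-1}$, so the proportionality constants cancel and the $Y_k$ may all be rescaled to a common $Y$.
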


We redefine now $A^i$ as $\sum_{ldru} A^i_{ldru}(Y^{-1}\otimes
\1)|ld\>\<ru|(Y\otimes \1)$, block $N/5$ columns together and
sandwich with $|nn\cdots n\>$ and $\<mm\cdots m|$ in Fig.
\ref{fig1}. Defining $\tilde{A}^{i;mn}$ as
\begin{equation*}
\sum_{bd}\tilde{A}^{i; mn}_{bd}|b\>\<d|=
\sum_{bd}\<m|A^{1\times N/5}|n\>|b\>\<d|
\end{equation*}
and the analog for $\tilde{B}^{i;mn}$, we have two injective
representations of the same MPS. By means of the 1D case (Theorem
\ref{MPS}), we obtain invertible matrices $Z_{nm}$ such that
$Z^{-1}_{mn}\tilde{A}^{i;mn}Z_{mn}=\tilde{B}^{i;mn}$.

The next step is to show that $Z_{mn}$ does not depend on $m$ and
$n$. We sandwich in Fig. \ref{fig1} with  $\<{m'}|^{\otimes L/2}
\<m|^{\otimes L/2}$ and $|{n'}\>^{\otimes L/2}|n\>^{\otimes L/2}$
and get Fig. \ref{fig3}. By summing with appropriate coefficients in
order to obtain "deltas", we obtain that
$\<l|Z_{mn}Z^{-1}_{m'n'}|k\>\<r|Z_{mn}^{-1}Z_{m'n'}|s\>=\delta_{kl}
\delta_{rs}$,
so $Z^{mn}=Z$ is indeed independent of $m$ and $n$. By reasoning as
above in the other direction, one can prove that $Z=Z'^{\otimes
N/5}$.

\begin{figure}[h!]
 \centering
  \includegraphics[width=0.5\textwidth]{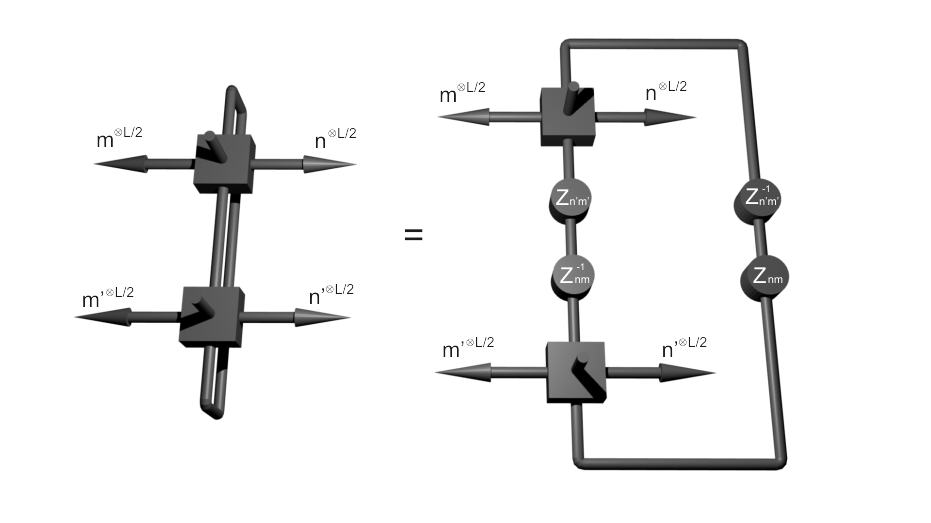}
  \caption[Fig. 5]{\footnotesize \emph{This is a representation of
  the resulting tensor after blocking $\frac{L}{2} \times \frac{N}{5}$
  sites. The vectors in the virtual space  correspond to tensor
  products of $\frac{L}{2}$ local vectors.}}
 \label{fig3}
\end{figure}

Up to now, we have proven the following Lemma

\begin{lem}
For any length $H$ for which one gets injectivity in the orthogonal
direction, we have the structure shown in Fig. \ref{fig4}. The case
where vertical is interchanged by horizontal is equivalent.
\end{lem}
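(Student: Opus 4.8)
The plan is to treat this Lemma as the consolidation of the four preceding lemmas together with the two-directional gauge-fixing argument, phrased for a single region blocked to height $H$ and width $N/5$. First I would pin down the \emph{horizontal} gauge. Grouping an entire column of height $H$ turns each PEPS into an MPS whose left and right virtual legs are tensor products of $H$ single horizontal bonds; Lemma \ref{lema2} guarantees that such a blocked column is injective in the orthogonal direction once $H$ is large enough, so Theorem \ref{MPS} applies vertically and yields one global invertible $Y$ relating the two column tensors, as in Fig. \ref{fig1}. Lemma \ref{lema3} shows this $Y$ preserves products, Lemma \ref{lema4} then forces $Y=P_\pi(Y_1\otimes\cdots\otimes Y_H)$, and the two subsequent lemmas remove the permutation ($P_\pi=\1$) and identify all factors ($Y_i=Y$). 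The horizontal gauge is therefore a single-site invertible matrix $Y$ acting identically on every left--right bond.

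Next I would pin down the \emph{vertical} gauge. After absorbing $Y$ into the tensor, I block $N/5$ columns and sandwich the horizontal legs with computational-basis product vectors $|nn\cdots n\>$ and $\<mm\cdots m|$; this collapses Fig. \ref{fig1} to a genuine one-dimensional problem in the orthogonal direction, to which Theorem \ref{MPS} again applies, giving invertible matrices $Z_{mn}$ with $Z_{mn}^{-1}\tilde{A}^{i;mn}Z_{mn}=\tilde{B}^{i;mn}$. Summing with appropriate coefficients to manufacture deltas (as in Fig. \ref{fig3}) removes the dependence on $m$ and $n$, leaving a single $Z$, and the factorization argument already used for $Y$ gives $Z=Z'^{\otimes N/5}$ for a single-site $Z'$ on every up--down bond. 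Combining both directions, the blocked tensor of $A$ equals that of $B$ up to $Y$ on each horizontal leg and $Z'$ on each vertical leg, which is exactly the structure drawn in Fig. \ref{fig4}.

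For the quantifier over $H$, I would observe that the whole chain used nothing about $H$ beyond injectivity of the height-$H$, width-$N/5$ block in the orthogonal direction, which Lemma \ref{lema2} supplies for all sufficiently large $H$; hence the conclusion holds for every such $H$. The closing sentence of the Lemma is then immediate: rows and columns, and correspondingly the bond pairs carrying $Y$ and $Z'$, enter the construction interchangeably, so rerunning the argument with the two lattice directions swapped produces the transposed structure.

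The step I expect to be the main obstacle is not a single computation but the \emph{mutual consistency and blocking-independence} of the two single-site gauges. One must verify that the $Y$ obtained by reducing along columns and the $Z'$ obtained by reducing along rows glue onto the same elementary tensor, and that neither depends on the blocking parameters $H$ and $N/5$ nor on the auxiliary vectors $|nn\cdots n\>$ used to sandwich. This is precisely where injectivity is indispensable: two injective representations of the same blocked state are related by a \emph{unique} gauge (the uniqueness recorded right after Theorem \ref{thm:main}), which forces the locally extracted matrices to agree and compels the factorizations $Y=Y^{\otimes H}$ and $Z=Z'^{\otimes N/5}$ to respect the block boundaries. Checking that this uniqueness propagates across all admissible blockings is the delicate point on which the clean statement of the Lemma rests.
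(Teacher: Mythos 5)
Your proposal is correct and coincides essentially step for step with the paper's own proof: in the paper this lemma is explicitly the consolidation of the preceding chain (Theorem \ref{MPS} applied to a blocked column via Lemma \ref{lema2}, then Lemmas \ref{lema3} and \ref{lema4} with $P_\pi=\1$ and $Y_i=Y$ to factorize the horizontal gauge, followed by the sandwiching with $|nn\cdots n\>$, $\<mm\cdots m|$, the delta argument of Fig.~\ref{fig3} making $Z_{mn}$ independent of $m,n$, and $Z=Z'^{\otimes N/5}$). Your closing concern about consistency of the two gauges is resolved exactly as you say, by the uniqueness of the gauge under injectivity noted after Theorem \ref{thm:main}.
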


\begin{figure}[h!]
 \centering
  \includegraphics[width=0.5\textwidth]{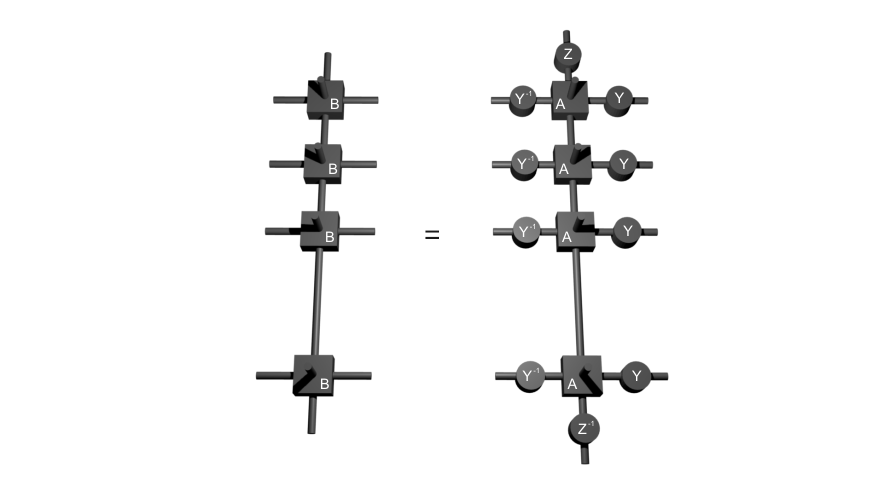}
  \caption[Fig. 6]{\footnotesize \emph{A subsystem of $H$ spins
  has two equivalent injective
  TI-PEPS representations iff they are connected by invertible matrices. }}
 \label{fig4}
\end{figure}

We want to prove now Theorem \ref{thm:main}. Let us
consider a $H\times K$ injective region, for instance $H = L/5$,
$K = N/5$. From Lemma \ref{lema2}, the larger regions in Fig.
\ref{fig5} are also injective. If we replace Fig. \ref{fig4}
first in each subregion (not the center), and then in the whole
region, we get the
desired result by using injectivity in the four subregions.

\begin{figure}[h!]
 \centering
  \includegraphics[width=0.3\textwidth]{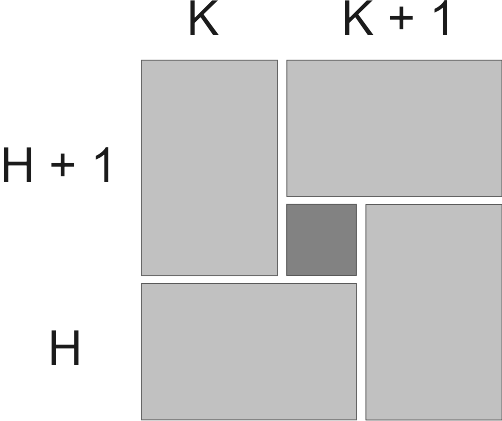}
  \caption[Fig. 7]{\footnotesize \emph{Representation of the regions
  of injectivity for the proof of Theorem \ref{thm:main}.}}
 \label{fig5}
\end{figure}

\

As we said in the introduction of this section, we can generalize
Theorem \ref{thm:main} to the honeycomb lattice. We need to
prove first the following

\begin{lem} \label{split}
Let $A,C \in \mathcal{M}_{d_1,d_2}$ and $B,D \in
\mathcal{M}_{d_2,d_3}$ and let us assume that $\min (d_1,d_2,d_3) =
d_2$. Then, if $A B = C D$ and $\rank (B) = \rank(D) = d_2$ there
exists an invertible matrix $W$ such that $A = C W$ and $B = W^{-1}
D$.
\end{lem}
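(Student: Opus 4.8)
The plan is to read $AB=CD=:M$ as two factorizations of a single $d_1\times d_3$ matrix through the intermediate space $\C^{d_2}$ and to show they differ by an element of $GL_{d_2}$, exactly as in the classical uniqueness of full-rank factorizations. The hypothesis $\rank B=\rank D=d_2$ together with $d_2\le d_3$ says that $B$ and $D$ have full row rank, i.e. they are surjective as maps $\C^{d_3}\to\C^{d_2}$; this surjectivity is what lets me undo them by right inverses, and it is the only place the assumption $d_2=\min(d_1,d_2,d_3)$ is really needed.

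First I would produce the candidate $W$ explicitly. Since $B$ is surjective it admits a right inverse $B^{r}$ with $BB^{r}=\1_{d_2}$, and right-multiplying $AB=CD$ by $B^{r}$ gives at once
\begin{equation*}
A=ABB^{r}=CDB^{r}=C\,(DB^{r}),
\end{equation*}
so that $A=CW$ with $W:=DB^{r}$. This half of the conclusion costs nothing beyond surjectivity of $B$. The remaining work is to show that this same $W$ is invertible and satisfies $WB=D$, which is just the rewriting $B=W^{-1}D$.

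The geometric crux is that $B$ and $D$ have the \emph{same} row space. Each row of $M=AB$ is a linear combination of rows of $B$, so the row space of $M$ lies inside the row space of $B$, and symmetrically inside that of $D$; since the common product $M$ has rank $d_2$ (its column space is $A(\range B)=\range A$, of dimension $d_2$ in the situation at hand), its row space is $d_2$-dimensional and therefore coincides with the $d_2$-dimensional row spaces of both $B$ and $D$. Hence every row of $D$ lies in the row space of $B$, so $D=\tilde W B$ for a unique $d_2\times d_2$ matrix $\tilde W$; right-multiplying by $B^{r}$ shows $\tilde W=DB^{r}=W$, so indeed $WB=D$, and $\rank(WB)=\rank D=d_2$ forces $\rank W=d_2$, i.e. $W\in GL_{d_2}$. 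Combined with $A=CW$ from the previous step, this is exactly the assertion of the lemma.

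The step I expect to be the real obstacle is precisely the rank of the common product $M$: the two identities $A=CW$ and $B=W^{-1}D$ cannot both hold in general unless $\rank M=d_2$. Indeed, taking $A=C=0$ with $B\neq D$ both of full row rank gives $AB=CD=0$ while no invertible $W$ can satisfy $B=W^{-1}D$, so the bare hypothesis $\rank B=\rank D=d_2$ does not suffice. The argument must therefore use that the factorization is genuinely full-rank, i.e. that $A$ and $C$ also have rank $d_2$ (equivalently $\rank M=d_2$); this is automatic in the injective-PEPS setting in which the lemma is applied. Once $\rank M=d_2$ is in hand, constructing $W$, checking its invertibility, and deriving the second identity are routine, resting only on the surjectivity of $B$ and $D$.
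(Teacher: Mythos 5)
Your proof is correct, and its first half is literally the paper's: both of you take a right inverse $B^{r}$ of $B$ (which exists since $\rank B=d_2\le d_3$), set $W:=DB^{r}$, and read $A=CW$ off from $AB=CD$. You diverge in how you obtain $B=W^{-1}D$. The paper left-multiplies $AB=CD$ by a left inverse $A^{-1}$ of $A$, sets $U:=A^{-1}C$, and computes $UW=A^{-1}(CD)B^{r}=A^{-1}(AB)B^{r}=\1_{d_2}$, so that $U=W^{-1}$ and $B=UD=W^{-1}D$. You instead compare row spaces: the row space of $M=AB=CD$ sits inside those of both $B$ and $D$, and once $\rank M=d_2$ all three coincide, giving $D=\tilde W B$ with $\tilde W=DB^{r}=W$ and invertibility from $\rank(WB)=\rank D=d_2$. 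The two routes rest on exactly the same hidden hypothesis, and your main added value is to make it explicit: the paper's step ``$B=A^{-1}CD$'' presupposes that $A$ admits a left inverse, i.e.\ $\rank A=d_2$ (and its unproved assertion that $W=DB^{r}$ is invertible also fails without this), while your argument needs $\rank M=d_2$, which---given that $B$ is surjective, so $\range M=\range A$---is the same condition. You are right that the lemma as literally stated is false and that the extra full-rank assumption on $A$ (automatic for the injective tensors to which the lemma is applied in the honeycomb argument, and silently used in the paper's proof) is genuinely needed.

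One small correction to your counterexample: $A=C=0$ with merely $B\neq D$ does not suffice, since if $B$ and $D$ share a row space (e.g.\ $B=2D$) an invertible $W$ with $WB=D$ still exists. You need $B$ and $D$ with \emph{different} row spaces, e.g.\ $d_2=1$, $d_3=2$, $B=(1\ \ 0)$, $D=(0\ \ 1)$: then $WB=(w\ \ 0)\neq D$ for every scalar $w$, while $A=CW$ holds trivially, refuting the unamended statement.
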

\begin{proof}
Since $B$ is full-rank and $\min (d_1,d_2,d_3) = d_2$, there exists a
matrix that we can call $B^{-1}$ such that $B B^{-1} = \1_{d_2}$.
Therefore, $A = C (D B^{-1})$ and we can denote $W = D B^{-1}$,
which is an invertible matrix. Similarly $B=A^{-1}CD$ and we can denote $U=A^{-1}C$. Since $UW=A^{-1}CDB^{-1}=BB^{-1}=\1_{d_2}$, we get that $U=W^{-1}$ and hence
$B = W^{-1} D$.
\end{proof}
We can  now prove the theorem for the honeycomb lattice. Let us
remark that the unit cell of this lattice contains two
sites and that the lattice associated to the unit cells is a
square lattice. The translational invariance is not site
by site, but unit cell by unit cell.

\begin{thm}[The honeycomb lattice]
Let $|\Psi \>$ and $|\Psi'\>$ be two PEPS defined in a
honeycomb lattice and such that the square lattice constituted by
the unit cells fulfils the conditions of Theorem
\ref{thm:main}. Then, $|\Psi \> = |\Psi'\>$ iff the
conditions shown in Fig. \ref{fig9} hold.
\end{thm}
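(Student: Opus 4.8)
The plan is to reduce the honeycomb case to the square-lattice Theorem \ref{thm:main} by blocking, and then to undo the blocking with Lemma \ref{split}. Concretely, I would first group the two sites of each unit cell into a single effective tensor. Since each site of the honeycomb lattice carries three bonds and the two sites of a unit cell share exactly one of them, the blocked tensor retains four external virtual legs (two coming from each site) together with one combined physical index, and the lattice of unit cells is precisely a square lattice. By hypothesis these blocked tensors satisfy the conditions of Theorem \ref{thm:main}, so $|\Psi\> = |\Psi'\>$ yields invertible matrices $Y$ and $Z$ acting on the external bonds such that the two blocked cell tensors are related by the gauge $Y\otimes Z\otimes Y^{-1}\otimes Z^{-1}$, with $Y,Z$ unique.

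The core of the argument is to transfer this gauge freedom, which lives on the external legs of the blocked cell, down onto the two individual honeycomb tensors, at the cost of introducing a new gauge on the internal bond. I would write the blocked-cell identity as an equation between contractions of the two constituent site tensors over the internal bond, absorbing the external matrices $Y,Z$ into the respective site tensors. With the internal index singled out as the summation index, this casts the identity in the form $AB=CD$ of Lemma \ref{split}, where $A,C$ encode the first site of $|\Psi\>$ and $|\Psi'\>$ and $B,D$ the second. Lemma \ref{split} then supplies an invertible $W$ with $A=CW$ and $B=W^{-1}D$; this $W$ is exactly the gauge on the internal bond, and together with $Y,Z$ on the external bonds it produces the relation between the two honeycomb tensors drawn in Fig.~\ref{fig9}. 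The converse implication is immediate, since on contracting the whole lattice the internal factors $WW^{-1}$ cancel within each cell and the external matrices cancel between neighbouring cells, so the gauged tensors describe the same state.

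The step I expect to be the main obstacle is verifying the rank hypothesis required to invoke Lemma \ref{split}, namely that the internal bond realises the minimal dimension $d_2$ in the relevant bipartition and that the matrices playing the role of $B,D$ are full rank, $\rank(B)=\rank(D)=d_2$. This is where injectivity must be used once more: injectivity of the honeycomb PEPS, inherited through the blocking, forces the maps associated with each constituent site tensor to be full rank across the internal cut, giving the factorisation $AB=CD$ precisely the non-degeneracy that Lemma \ref{split} demands. A secondary point needing care is the consistency of the internal gauge $W$ across the lattice; but since each internal bond is contained entirely within a single unit cell and is contracted away there, no compatibility condition between distinct cells arises, so the purely local split is already sufficient, and the uniqueness of $W$ follows from the same full-rank property.
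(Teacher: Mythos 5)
Your overall architecture coincides with the paper's proof: block the two sites of each unit cell into a four-legged tensor (the lattice of cells being a square lattice), apply Theorem \ref{thm:main} to get the unique external gauges $Y,Z$ --- this is the paper's Fig.~\ref{fig10} --- and then split the resulting single-cell identity $AB=CD$ via Lemma \ref{split} to obtain the internal gauge $W$ of Fig.~\ref{fig9}; the converse by cancellation of $WW^{-1}$ inside cells and of the external matrices between neighbouring cells is likewise how the paper reads the figure, and your observation that no inter-cell compatibility condition on $W$ arises (each internal bond lives entirely inside one cell) is correct.

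However, the one step where you go beyond the paper's two-line proof is flawed: injectivity does \emph{not} imply the rank hypothesis $\rank(B)=\rank(D)=d_2$ of Lemma \ref{split}. The hypothesis of the theorem only asserts injectivity for the square lattice of unit cells, and in that picture every internal bond is contracted away inside its own cell; each map $\Gamma_R$ takes as input only the external boundary indices of a union of whole cells, and therefore depends on the site tensors exclusively through the product $AB$, never through the individual factors. No injectivity condition on the blocked lattice can distinguish a full-rank factorization of the cell tensor from a rank-deficient one. Concretely, if $\rank(B)<d_2$ pick $v\neq 0$ with $v^{T}B=0$ and replace $A$ by $A+uv^{T}$: the cell tensor, the state, and all injectivity properties are unchanged, yet for $u\notin\range(A)$ (which exists, since $\rank(A)\le D< dD^2$) the modified pair is not related to the original by any invertible matrix on the internal bond, because $uv^{T}=A(W-\1)$ would force $u\in\range(A)$. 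So the full-rank condition must be imposed as an assumption (as the paper does silently by invoking Lemma \ref{split} without comment), or first secured by reducing each internal bond dimension to the operator Schmidt rank of the cell tensor across the two-site cut --- a rank which agrees for both representations, since by Theorem \ref{thm:main} their blocked tensors coincide up to invertible matrices on the external legs --- after which both factors are automatically full rank and Lemma \ref{split} applies to the reduced tensors. Your remark that $W$ is then unique is correct: two solutions of $WB=D$ with $B$ of full row rank must coincide.
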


\begin{figure}[h!]
 \centering
  \includegraphics[width=0.5\textwidth]{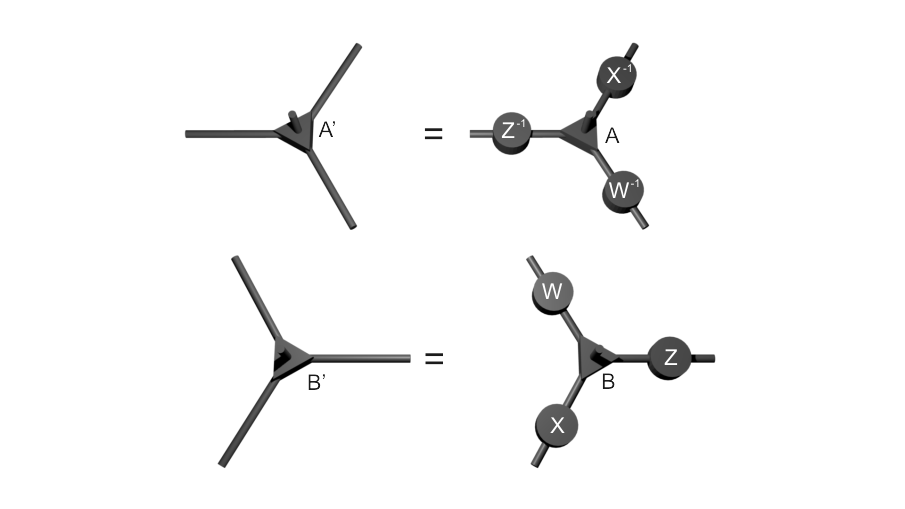}
  \caption{\footnotesize \emph{These are the relations which the
  tensors defining two TI-PEPS on a honeycomb lattice must fulfil
  in order to represent the same state.}}
 \label{fig9}
\end{figure}

\begin{proof}
Let us apply Theorem \ref{thm:main} to the square lattice which the
unit cell constitutes. Then, we obtain the equality shown in Fig.
\ref{fig10} and Lemma \ref{split} completes the proof of the theorem.
\end{proof}

\begin{figure}[h!]
 \centering
  \includegraphics[width=0.5\textwidth]{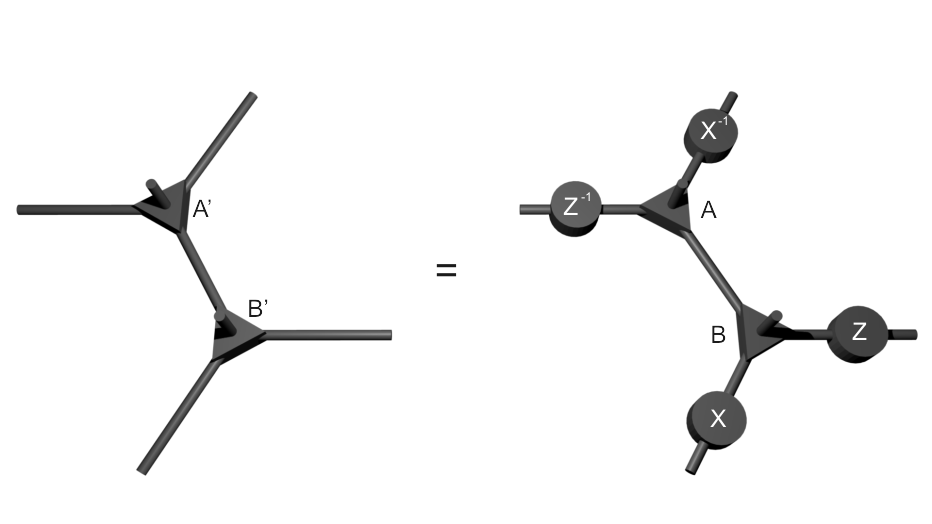}
  \caption{\footnotesize \emph{The possibility of transforming
  the honeycomb lattice into a square lattice by blocking tensors enables us to apply the result on equivalent TI-PEPS representations for the square lattice.}}
 \label{fig10}
\end{figure}

\end{section}


 \begin{section}{Symmetries}

String order parameters have been proven to be a very useful tool in the detection and understanding of quantum phase transitions. However, as pointed out in \cite{Anfuso} its application could not go beyond the 1D case. In \cite{PWSVC08}, with the aid of MPS, it has been shown that the existence of a string order parameter is intimately related to the existence of a symmetry, which allows to design an appropriate 2D definition: the existence of a local symmetry when we consider increasing sizes of the system. A trivial sufficient condition for this to hold in PEPS is proposed there (see Fig. \ref{fig6}), and further analyzed in \cite{Vidal-symm} in the more general context of Tensor Network States. The aim of this section is to prove that, for injective PEPS, the condition is also necessary. The 1D version is proven in \cite{PWSVC08} with the assumption of injectivity and in \cite{SWPC09} for the  general 1D case.

\begin{thm}[Local symmetry]\label{symm}
If a TI-PEPS defined on an $L\times N$ lattice has a symmetry
$u$, i.e. $u^{\otimes NL}|\psi_A\>=e^{i\theta'}|\psi_A\>$, and is injective in regions of size $L/5\times N/5$, then the tensors
defining it satisfy the relation in Fig. \ref{fig6} with $e^{i\theta
NL}=e^{i\theta'}$. Moreover, if $u_g$ is a representation of a
continuous group $G$, then $Y_g$, $Z_g$ and $e^{i\theta_g}$ are  representations as well.
\end{thm}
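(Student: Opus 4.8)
The plan is to reduce the symmetry statement to the canonical-form result (Theorem~\ref{thm:main}) already proved. The key observation is that applying the symmetry $u$ to every physical site of $|\psi_A\>$ produces a \emph{new} PEPS whose defining tensor is simply $\tilde A = \sum_{i;ldru} (u A)^i_{ldru}|i\>\<ldru|$, i.e. the old tensor with the physical index acted on by $u$. Since $u^{\otimes NL}|\psi_A\> = e^{i\theta'}|\psi_A\>$, the state $|\psi_{\tilde A}\>$ equals $e^{i\theta'}|\psi_A\>$, which is the same state as $|\psi_A\>$ up to a global phase. Absorbing the phase $e^{i\theta'} = e^{i\theta NL}$ as a uniform per-site rescaling $e^{i\theta}$ of each tensor, I obtain two tensors, $e^{i\theta}\tilde A$ and $A$, defining literally the same state on an $L\times N$ lattice, both injective in regions of size $L/5\times N/5$ by hypothesis.

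First I would invoke Theorem~\ref{thm:main} directly to these two tensors. It yields unique invertible matrices $Y,Z$ such that $(e^{i\theta}\tilde A)^i(Y\otimes Z\otimes Y^{-1}\otimes Z^{-1}) = A^i$ for all $i$. Rewriting $\tilde A^i = \sum_j u_{ij} A^j$ and rearranging gives exactly the pictorial relation of Fig.~\ref{fig6}: the tensor $A$ with $u$ acting on its physical leg equals $A$ conjugated by $Y\otimes Z\otimes Y^{-1}\otimes Z^{-1}$ together with the phase $e^{-i\theta}$. This establishes the first assertion.

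For the group-representation claim, suppose $u_g$ is a representation of a continuous group $G$, so $u_g u_h = u_{gh}$. Applying the symmetry relation successively for $g$ and $h$ gives one way to realize the symmetry for $u_{gh}$ via the composed matrices $Y_g Y_h$, $Z_g Z_h$ and phase $e^{i(\theta_g+\theta_h)}$; applying it directly for $gh$ gives another realization via $Y_{gh}$, $Z_{gh}$, $e^{i\theta_{gh}}$. Both produce the same transformation of the tensor $A$, and here the \emph{uniqueness} part of Theorem~\ref{thm:main} is the crucial lever: since the $(Y,Z)$ pair implementing a given gauge equivalence is unique, I conclude $Y_{gh}=Y_g Y_h$, $Z_{gh}=Z_g Z_h$ and $e^{i\theta_{gh}}=e^{i\theta_g}e^{i\theta_h}$ (up to the inherent freedom $Y\mapsto \lambda Y$, $Z\mapsto \lambda^{-1}Z$ noted already, which must be fixed by a consistent normalization convention). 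This shows $Y_g$, $Z_g$ and $e^{i\theta_g}$ are representations.

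The main obstacle I anticipate is the bookkeeping in the group-representation step. The gauge matrices $Y,Z$ are unique only up to the scalar freedom $Y\otimes Z = (\lambda Y)\otimes(\lambda^{-1}Z)$, so strictly $Y_g$ and $Z_g$ are \emph{projective} representations unless one pins down the normalization; I would need to argue that a consistent choice (e.g. fixing $\det$ or a chosen matrix entry) upgrades them to genuine representations, or otherwise state the result as a representation modulo this harmless scalar. Establishing that this scalar ambiguity can be removed coherently across all of $G$ — rather than just for individual pairs $g,h$ — is the only delicate point; the rest follows mechanically from Theorem~\ref{thm:main} and its uniqueness clause.
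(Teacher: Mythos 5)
Your reduction to Theorem~\ref{thm:main} is exactly the paper's proof of the first claim: the authors likewise act with $u$ and the phase $e^{-i\theta}$ on the tensor $A$, observe the new tensor is injective on the same regions (trivially, since $u^{\otimes|R|}$ is invertible it does not affect injectivity of $\Gamma_R$), and invoke the canonical form. For the ``moreover'' clause the paper gives no argument at all beyond citing \cite[Theorem 7]{SWPC09}, and your composition-plus-uniqueness scheme is precisely the skeleton of that cited argument, so you have reconstructed the intended route rather than deviated from it.

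Two refinements on the delicate point you flag. First, note that Theorem~\ref{thm:main}'s uniqueness of $(Y,Z)$ can only mean uniqueness up to scalars, since $(\lambda Y)\otimes(\mu Z)\otimes(\lambda Y)^{-1}\otimes(\mu Z)^{-1}=Y\otimes Z\otimes Y^{-1}\otimes Z^{-1}$ identically; your reading is the correct one, and in fact this cancellation works \emph{for} you on the phase: because the scalar freedom drops out of the conjugating operator, the phase $e^{i\theta_g}$ in the relation of Fig.~\ref{fig6} is genuinely unique, so $e^{i\theta_{gh}}=e^{i\theta_g}e^{i\theta_h}$ follows cleanly with no normalization convention needed --- that part of your worry evaporates. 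Second, for $Y_g,Z_g$ the $\mathbb{C}^*$-valued cocycle $\lambda(g,h)$ is genuinely present, and removing it is exactly where the hypothesis that $G$ is \emph{continuous} enters: the argument of \cite[Theorem 7]{SWPC09} trivializes the cocycle by working along one-parameter subgroups and generators, using continuity and connectedness. Your suggested fix of normalizing a determinant is not by itself sufficient (it pins $\lambda$ only up to $D$-th roots of unity, after which one still needs a continuity argument to conclude the residual discrete cocycle is trivial near the identity and hence everywhere on a connected group). So the one step you leave open is real, but it is precisely the step the paper outsources to \cite{SWPC09}; with that reference in hand your proposal is a faithful, slightly more explicit version of the paper's proof.
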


\begin{figure}[h!]
 \centering
  \includegraphics[width=0.5\textwidth]{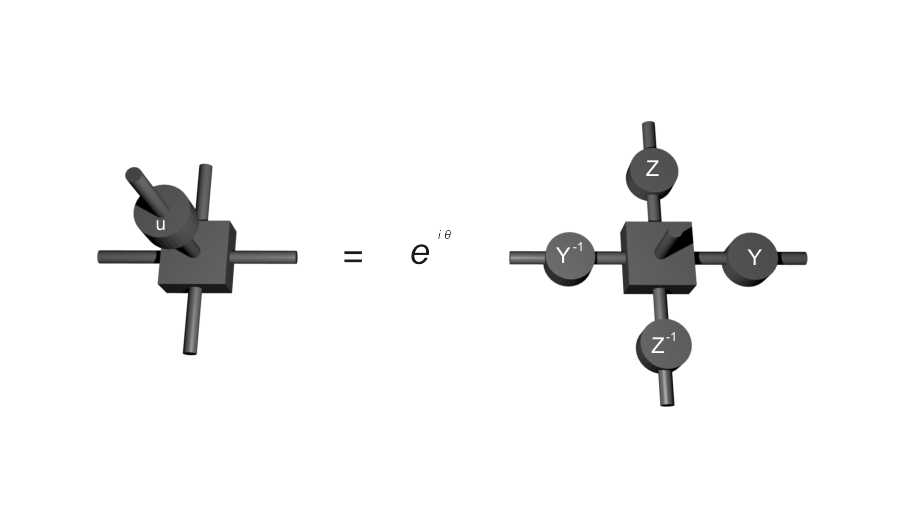}
  \caption{\footnotesize \emph{This is a graphical
  representation of the equation that a PEPS fulfils if it is
  invariant under a representation $u_g$ of a group $G$. Then,
  the symmetry is inherited into a couple of representations of
  $G$, called $Y_g$ and $Z_g$, up to a phase $e^{i \theta_g}$.}}
 \label{fig6}
\end{figure}

\begin{proof}
Notice that when acting with $u$ and $e^{-i\theta}$ on the tensor $A$ which defines the PEPS (see Fig \ref{fig6}), we get a new tensor $B$ that is also injective in regions of size $L/5\times N/5$ and such that $|\psi_A\>=|\psi_B\>$. Theorem \ref{thm:main} then gives  the result. In order to prove that
the invertible matrices $Y_g$ and $Z_g$ are representations of $G$, we only
need to follow the arguments used in \cite[Theorem 7]{SWPC09}.
\end{proof}

With exactly the same reasoning, we can characterize the spatial symmetries: reflection,
$\pi / 2$-rotations and $\pi$-rotations:

\begin{thm}[Reflection symmetry]\label{symm: refl}
Let us consider an $L \times N$ TI-PEPS with the property that for a
region of size smaller than $L/5 \times N/5$ it is injective. If this
PEPS is invariant under a reflection with respect to a vertical
axis, then there exist invertible matrices $Y$, $Z$ such that the tensors defining the PEPS verify Fig.
\ref{fig11}, that is, $\sum_{ldru} A^i_{ldru} \<ldru|= (\sum_{ldru}
A^i_{ldru}\<rdlu|) Y\otimes Z\otimes Y^{-1}\otimes Z^{-1}$
for all $i$.
\end{thm}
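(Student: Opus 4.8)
The plan is to reduce reflection symmetry to the already-established canonical form theorem, exactly as the proof of the local symmetry theorem (Theorem \ref{symm}) reduces an internal symmetry to Theorem \ref{thm:main}. The key observation is that reflection symmetry relates the PEPS to a \emph{reflected} version of itself built from the same tensor. Concretely, if $|\psi_A\>$ is invariant under reflection about a vertical axis, then reflecting the lattice produces a second PEPS representation of the very same state, but one whose local tensor is the reflected tensor $\tilde A^i_{ldru} := A^i_{rdlu}$ (the left and right virtual indices are interchanged, since a vertical-axis reflection swaps the horizontal direction while leaving the vertical direction untouched).

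First I would verify that this reflected tensor still defines an injective PEPS on regions of size $L/5 \times N/5$. This is immediate: reflection is a bijection of the lattice that carries an injective region to an injective region of the same shape, so $\Gamma_R$ for the reflected tensor is injective whenever it was for the original. Having both $|\psi_A\>$ and the reflected state $|\psi_{\tilde A}\>$ equal (by the assumed symmetry) and both injective in the required regions, I would then invoke Theorem \ref{thm:main} directly. That theorem yields unique invertible matrices $Y,Z$ with
\begin{equation*}
\tilde A^i (Y\otimes Z\otimes Y^{-1}\otimes Z^{-1}) = A^i
\end{equation*}
for all $i$. Unwinding the definition $\tilde A^i_{ldru}=A^i_{rdlu}$ is exactly the relation claimed in the statement, namely $\sum_{ldru}A^i_{ldru}\<ldru| = (\sum_{ldru}A^i_{ldru}\<rdlu|)\,Y\otimes Z\otimes Y^{-1}\otimes Z^{-1}$, which is what Fig. \ref{fig11} depicts.

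The only genuine subtlety — and the step I would treat most carefully — is bookkeeping of the index relabeling under reflection: one must be sure that a vertical-axis reflection acts as the swap $l\leftrightarrow r$ on virtual indices while fixing $u,d$, and that the gauge freedom of Theorem \ref{thm:main} (separate matrices $Y$ on horizontal bonds and $Z$ on vertical bonds) lines up correctly with this geometry so that the resulting matrices can be assigned consistently to the horizontal and vertical legs. Because reflection does not mix horizontal with vertical bonds, the $Y\otimes Z\otimes Y^{-1}\otimes Z^{-1}$ structure is preserved, and no transpose or additional conjugation enters; the argument is otherwise a verbatim transcription of Theorem \ref{symm}, so I expect no essential obstacle beyond this index-tracking. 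The same template then handles the rotation cases mentioned in the text, with the appropriate permutation of virtual legs replacing the $l\leftrightarrow r$ swap.
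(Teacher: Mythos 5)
Your proposal is correct and follows essentially the same route as the paper, which disposes of Theorem~\ref{symm: refl} ``with exactly the same reasoning'' as Theorem~\ref{symm}: the reflected tensor $\tilde A^i_{ldru}=A^i_{rdlu}$ is an injective TI-PEPS representation of the same state, so Theorem~\ref{thm:main} supplies the unique invertible $Y,Z$. Your index bookkeeping (vertical-axis reflection swaps $l\leftrightarrow r$ while fixing $u,d$, preserving the $Y\otimes Z\otimes Y^{-1}\otimes Z^{-1}$ gauge structure with no transposes) is exactly what Fig.~\ref{fig11} encodes, with the extra constraints $Y^T=Y$, $Z^2=\1$ arising only afterwards by applying the reflection twice.
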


\begin{figure}[h!]
 \centering
  \includegraphics[width=0.5\textwidth]{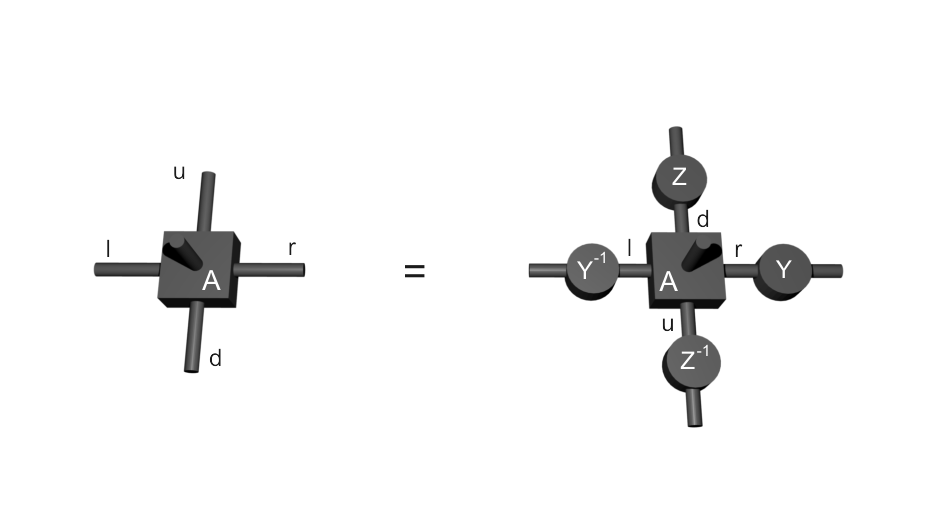}
  \caption{\footnotesize \emph{This figure represents the
  condition which must be fulfilled by a TI-PEPS in order to generate
  a state invariant under reflections (in this case with respect to
  the horizontal plane).}}
 \label{fig11}
\end{figure}

Moreover, it is easy to see that $Y,Z$ must satisfy $Y^T=Y$, $Z^2=\1$.
The characterization of the reflection with respect to the
horizontal axis follows straightforwardly by changing the roles of the
horizontal/vertical directions.

\begin{thm}[spatial $\pi/2$-rotation symmetry]\label{symm: pi2rot}
If an $L\times N$ TI-PEPS with the property that for a region of
size smaller than $L/5\times N/5$ it is injective has a spatial $\pi /
2$-rotation invariance, then there exist invertible matrices $Y$,
$Z$ such that the tensors $A^i$
defining the PEPS verify Fig. \ref{fig12}, that is,  $\sum_{ldru}
A^i_{ldru} \<ldru|= (\sum_{ldru} A^i_{ldru}\<uldr|) Y \otimes
Z\otimes Y^{-1}\otimes Z^{-1}$ for all $i$.
\end{thm}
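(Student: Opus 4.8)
The plan is to reuse the strategy of Theorems \ref{symm} and \ref{symm: refl}: manufacture a second injective representation of the same state from the symmetry, and then let the canonical form (Theorem \ref{thm:main}) do all the work. The point is that a spatial $\pi/2$-rotation is, at the level of the global Hilbert space, just a unitary permutation of the sites, so it produces an honest PEPS whose defining tensor is a relabelled copy of $A$.

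First I would encode the rotation at the level of the defining tensor. Rotating the lattice by $\pi/2$ cyclically permutes the four virtual legs $(l,d,r,u)\mapsto(u,l,d,r)$ while leaving the physical index untouched, so it sends $A$ to the tensor $B^i:=\sum_{ldru}A^i_{ldru}\<uldr|$. Applying the rotation operator to $|\psi_A\>$ yields exactly $|\psi_B\>$, and the assumed $\pi/2$-rotation invariance then reads $|\psi_A\>=|\psi_B\>$ (any overall phase picked up by the permutation can be absorbed into a rescaling of the tensor, exactly as in the reflection case).

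Next I would check that $B$ inherits injectivity on a region of size $L/5\times N/5$. Injectivity is a property of the contraction map $\Gamma_R$, and relabelling the legs of every tensor by the rotation together with rotating the region $R$ leaves $\Gamma_R$ unchanged up to a permutation of its inputs and outputs; hence $\Gamma_R$ for $B$ is injective precisely when $\Gamma_{R'}$ for $A$ is, where $R'$ is the rotated region. For a $\pi/2$-rotation to be a symmetry at all the torus must be square, so $R'$ has the same shape $L/5\times N/5$, and the hypothesis of Theorem \ref{thm:main} holds for $B$ as well; if one prefers to decouple the two directions, Lemma \ref{lema2} enlarges an injective region in either direction and absorbs any aspect-ratio mismatch induced by the rotation.

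With $A$ and $B$ now two injective representations of the same state, Theorem \ref{thm:main} supplies unique invertible matrices $Y,Z$ with $A^i(Y\otimes Z\otimes Y^{-1}\otimes Z^{-1})=B^i$ for all $i$. Substituting $B^i=\sum_{ldru}A^i_{ldru}\<uldr|$ produces the relation of Fig. \ref{fig12} verbatim, possibly after renaming $Y\mapsto Y^{-1}$ and $Z\mapsto Z^{-1}$ to move the gauge to the stated side. The argument carries essentially no analytic content; the entire weight rests on Theorem \ref{thm:main}. The only place that demands genuine care, and hence the main obstacle such as it is, is the bookkeeping: matching the cyclic leg permutation $ldru\to uldr$ and the left/right placement of the gauge matrices to the conventions of Fig. \ref{figtheorem}, together with the verification that injectivity really does survive the swap of the two lattice directions that the rotation performs.
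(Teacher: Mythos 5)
Your proposal is correct and is essentially the paper's own argument: the paper proves this theorem (together with the reflection and $\pi$-rotation cases) by "exactly the same reasoning" as Theorem \ref{symm}, namely viewing the rotated state as the PEPS generated by the leg-permuted tensor $B^i=\sum_{ldru}A^i_{ldru}\<uldr|$, which is still injective, and then invoking the canonical form of Theorem \ref{thm:main}. Your additional remarks on the square torus ($L=N$) and on injectivity surviving the swap of lattice directions are sensible bookkeeping that the paper leaves implicit.
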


\begin{figure}[h!]
 \centering
  \includegraphics[width=0.5\textwidth]{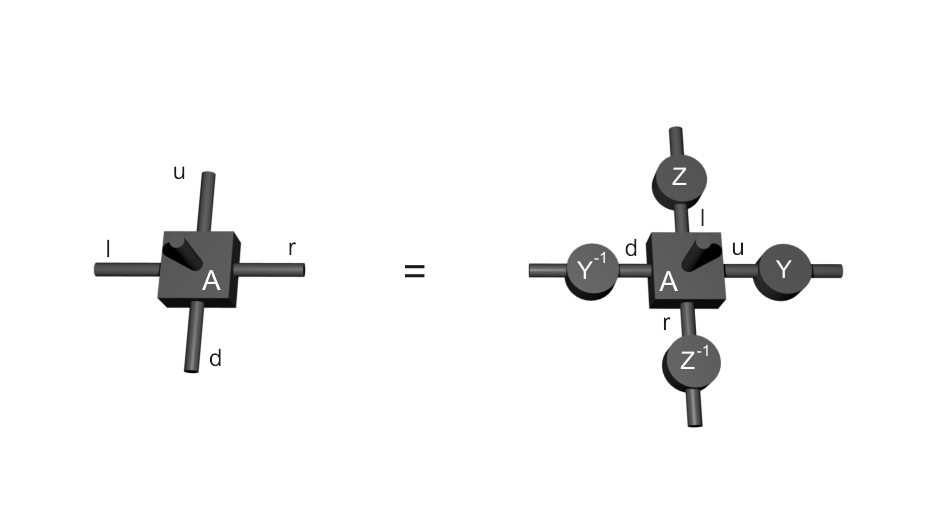}
    \caption{\footnotesize \emph{This figure represents the
  condition which must be fulfilled by a TI-PEPS in order to generate
  a state invariant under $\frac{\pi}{2}$-rotations (in this case a
  clockwise rotation).}}
 \label{fig12}
\end{figure}

In this case, one can see that $Y$, $Z$ must satisfy the additional constraints $(YZ)^T=YZ$, $(ZY)^T=ZY$.

Finally, we characterize the PEPS which are symmetric respect to a
$\pi$-rotation.

\begin{thm}[spatial $\pi$-rotation symmetry]\label{symm: pirot}
Let us consider an $L\times N$ TI-PEPS with the property that for a
region of size smaller than $L/5\times N/5$ it is injective and that
it is invariant under a  $\pi$-rotation, then there exist
invertible matrices $Y$, $Z$ such that the tensors defining the PEPS verify Fig. \ref{fig13}, that is,
$\sum_{ldru} A^i_{ldru} \<ldru|= (\sum_{ldru} A^i_{ldru}\<ruld|) Y
\otimes Z\otimes Y^{-1}\otimes Z^{-1}$ for all $i$.
\end{thm}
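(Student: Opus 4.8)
The plan is to follow verbatim the strategy of Theorems~\ref{symm}, \ref{symm: refl} and \ref{symm: pi2rot}: I produce from $A$ a second tensor $B$ that (i) defines the very same state and (ii) is again injective on regions of size $L/5\times N/5$, and then I let the canonical form, Theorem~\ref{thm:main}, do all the algebraic work. Concretely, the $\pi$-rotation invariance of $|\psi_A\>$ means that contracting the network obtained by rotating the whole lattice by $\pi$ reproduces $|\psi_A\>$ (any global phase, which squares to $1$ since the rotation is an involution, can be absorbed into the tensor exactly as in the earlier symmetry proofs).

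The only genuine bookkeeping is to read off how the $\pi$-rotation acts on a single tensor. A $\pi$-rotation sends the site at $(x,y)$ to $(L{+}1{-}x,\,N{+}1{-}y)$ and, at the level of the legs of $A^i_{ldru}$, interchanges left with right and up with down. Expressing the new left/down/right/up legs in terms of the old ones gives the relabelled tensor $B^i_{ldru}:=A^i_{ruld}$, so that $\sum_{ldru}B^i_{ldru}\<ldru|=\sum_{ldru}A^i_{ldru}\<ruld|$, which is exactly the tensor appearing on the right-hand side of the claimed identity. (Equivalently, this leg permutation is the composite of the vertical and horizontal reflections of Theorem~\ref{symm: refl}, so one could instead derive the statement by applying that theorem twice; I would use the direct route.)

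Next I would verify that $B$ satisfies the hypotheses of Theorem~\ref{thm:main}. Because a $\pi$-rotation maps an $L\times N$ rectangle to an $L\times N$ rectangle, $B$ is a PEPS on the \emph{same} lattice — this is precisely where the $\pi$-rotation is easier than the $\pi/2$ case, which swaps $L$ and $N$ and requires extra care. Moreover, injectivity of a region is invariant under the rigid rotation of that region: the map $\Gamma_R$ for $B$ is the map $\Gamma_{R'}$ for $A$ with its boundary indices relabelled, hence injective, so $B$ is injective on regions of size $L/5\times N/5$. Together with $|\psi_B\>=|\psi_A\>$ from the first paragraph, Theorem~\ref{thm:main} then supplies invertible $Y,Z$ with $A^i(Y\otimes Z\otimes Y^{-1}\otimes Z^{-1})=B^i$ for all $i$; substituting $B^i_{ldru}=A^i_{ruld}$ and, if desired, relabelling $Y,Z$ by their inverses (immaterial for an existence claim) gives exactly the stated relation $\sum_{ldru}A^i_{ldru}\<ldru|=(\sum_{ldru}A^i_{ldru}\<ruld|)\,Y\otimes Z\otimes Y^{-1}\otimes Z^{-1}$.

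I expect no serious obstacle: the algebraic core is entirely discharged by Theorem~\ref{thm:main}, and the only thing demanding attention is step two — getting the leg permutation $ruld$ right and confirming that the shape and injectivity hypotheses transfer to $B$. The one place where I would slow down is if one also wants the additional constraints on $Y$ and $Z$ (the analogues of $Y^T=Y,\ Z^2=\1$ for reflections): those should follow by applying the derived relation twice and invoking the uniqueness part of Theorem~\ref{thm:main} on the composite $Y\otimes Z\otimes Y^{-1}\otimes Z^{-1}$ applied twice, which must equal the identity gauge because the $\pi$-rotation squares to the identity; pinning down the precise symmetry/normalization conditions there is the only mildly delicate computation.
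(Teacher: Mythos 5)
Your proposal is correct and takes essentially the same route as the paper, which proves this theorem ``with exactly the same reasoning'' as Theorem~\ref{symm}: define the rotated tensor $B^i_{ldru}=A^i_{ruld}$, observe that it represents the same state and remains injective on $L/5\times N/5$ regions (the $\pi$-rotation preserving the lattice and region shapes), and invoke the canonical form of Theorem~\ref{thm:main}. Your explicit bookkeeping of the leg permutation, the transfer of injectivity, and the relabelling of $Y,Z$ by inverses is precisely the content the paper leaves implicit, and your remark on deriving $Y^T=Y$, $Z^T=Z$ via involutivity and the uniqueness part of Theorem~\ref{thm:main} matches the paper's unproved side claim.
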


Now the constraints are $Z^T=Z$, $Y^T=Y$.

\begin{figure}[h!]
 \centering
  \includegraphics[width=0.5\textwidth]{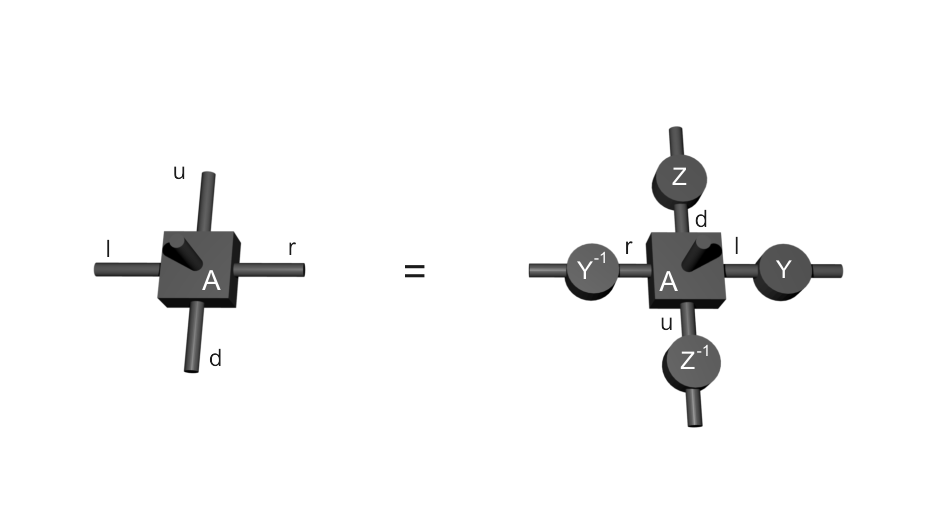}
  \caption[Fig. 8]{\footnotesize \emph{This figure represents the
  condition which must be fulfilled by a TI-PEPS in order to generate
  a state invariant under $\pi$-rotations (in this case a
  clockwise rotation).}}
 \label{fig13}
\end{figure}

\end{section}

\begin{section}{Applications}
It is clear that a symmetry must imposes restrictions on the possible behaviors and properties of a quantum system. Understanding these restrictions is a hard problem that has led the research in Quantum Many Body Physics in the last decades. For PEPS, which seem to provide a reasonably complete description of quantum states, we have proven a simple characterization of the existence of symmetries, which immediately leads to a number of consequences. In the lines below we list some of them.


\begin{subsection}{Lieb-Schultz-Mattis Theorem}

The Lieb-Schultz-Mattis theorem states that, for semi-integer spin, a $SU(2)$-invariant 1D {\it Hamiltonian} cannot have a uniform (independent of the size of the system) energy gap. This theorem has been generalized in a number of ways. Still in the 1D case but relaxing the symmetry to a $U(1)$ symmetry, Oshikawa, Yamanaka and Affleck showed in \cite{OYA97} that the same conclusion holds if $J-m$ is not an integer, where $J$ is the spin and $m$ the magnetization per particle. For the $SU(2)$ case in 2D, Hastings and Nachtergaele-Sims proved that the same results holds \cite{HastingsLSM}. In \cite{SWPC09}, we showed how the orginial Lieb-Schultz-Mattis Theorem can be understood on the level of states. More precisely, we showed that any $SU(2)$ invariant MPS with semi-integer spin cannot be injective. In this section we will give a 2D version of the Oshikawa-Yamanaka-Affleck theorem, by showing that a $U(1)$ symmetric PEPS for which $J-m$ is not an integer cannot be injective.

Let us start with a PEPS $|\psi_A\>$ of spin $J$ particles with
a $U(1)$ symmetry in the $z$ direction, that is
\begin{equation*}
u_g^{\otimes N}|\psi_A\>=e^{i\theta_{g}}|\psi_A\>
\end{equation*}
with $u_g=e^{igS_z}$. Since $g\mapsto e^{i\theta_{g}}$ is clearly
a representation, there exists $\theta$ such that
$\theta_{g}=Ng\theta$. We will show that
\begin{lem}\label{thetadef}
$\theta$ coincides with the magnetization
per particle $m$.
\end{lem}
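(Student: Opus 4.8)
The plan is to compute the magnetization per particle directly from its definition and relate it to the phase $\theta$. Recall that the magnetization per particle is $m = \frac{1}{N}\langle\psi_A|\sum_k S_z^{(k)}|\psi_A\rangle / \langle\psi_A|\psi_A\rangle$, where $S_z^{(k)}$ is the $z$-spin operator acting on site $k$. The key observation is that the generator of the $U(1)$ symmetry is precisely $\sum_k S_z^{(k)}$, so the total magnetization operator and the symmetry are two facets of the same object. First I would differentiate the symmetry relation $u_g^{\otimes N}|\psi_A\rangle = e^{iNg\theta}|\psi_A\rangle$ with respect to the group parameter $g$ and evaluate at $g=0$.

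Concretely, writing $u_g = e^{igS_z}$ so that $u_g^{\otimes N} = e^{ig\sum_k S_z^{(k)}}$, the symmetry condition becomes $e^{ig\sum_k S_z^{(k)}}|\psi_A\rangle = e^{iNg\theta}|\psi_A\rangle$. Taking $\frac{d}{dg}$ at $g=0$ yields $\left(\sum_k S_z^{(k)}\right)|\psi_A\rangle = N\theta\,|\psi_A\rangle$, i.e.\ $|\psi_A\rangle$ is an eigenstate of the total magnetization operator with eigenvalue $N\theta$. Then taking the expectation value in the normalized state gives
\begin{equation*}
Nm = \langle\psi_A|\textstyle\sum_k S_z^{(k)}|\psi_A\rangle/\langle\psi_A|\psi_A\rangle = N\theta,
\end{equation*}
so that $m = \theta$, which is exactly the claim. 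The argument is essentially a one-line consequence of the fact that a global $U(1)$ phase character $\theta_g = Ng\theta$ forces the state to live in a single total-$S_z$ sector, and the sector label divided by $N$ is the magnetization per particle.

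I expect the main subtlety (rather than a serious obstacle) to lie in justifying that $\theta_g$ must be linear in $g$, i.e.\ that $\theta_g = Ng\theta$ for a single real constant $\theta$. This is already asserted in the excerpt from the fact that $g\mapsto e^{i\theta_g}$ is a one-dimensional continuous representation of $U(1)$ (equivalently of $\mathbb{R}$ modulo the periodicity of $g$), whose characters are exactly $g\mapsto e^{iNg\theta}$; so I would simply invoke that representation-theoretic fact. A secondary point worth a remark is consistency of $\theta$ with the spectrum: since $S_z$ has eigenvalues in $\{-J,\dots,J\}$, the total magnetization eigenvalue $N\theta$ must be an achievable sum of such values, but this imposes no difficulty for the identification itself and only matters when we later feed $\theta = m$ into the quantization argument distinguishing integer from non-integer $J-m$. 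The proof is therefore short: differentiate the symmetry relation, read off that $|\psi_A\rangle$ is a total-$S_z$ eigenstate, and divide by $N$.
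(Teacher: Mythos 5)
Your proof is correct and follows essentially the same route as the paper: both expand (differentiate) the symmetry relation $u_g^{\otimes NL}|\psi_A\rangle = e^{iNg\theta}|\psi_A\rangle$ to first order in $g$ around the identity and read off $\theta=m$ by taking the expectation value of the total $S_z$ operator. Your version is in fact slightly cleaner, since you make explicit the intermediate fact that $|\psi_A\rangle$ is an eigenstate of $\sum_k S_z^{(k)}$ with eigenvalue $N\theta$, whereas the paper compresses this into ``simplifying both results.''
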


To see this it is enough to expand both sides of the expression $u_g^{\otimes N}|\psi_A\>=e^{iNg\theta}|\psi_A\>$ around the identity:
from the LHS we get $u_g^{\otimes N} |\psi_A \rangle \simeq (\1 + i g
\sum_{j} S_{j}^{z}) |\psi_A \rangle$, while the RHS gives $(1+i N
\vec{g} \cdot \vec{\theta})|\psi_A \rangle$. By simplifying both
results, we get $\theta=\<\psi_A|\sum_{j} S_{j}^{z}|\psi_A \rangle$, the desired result.

Now we can prove the announced generalized Lieb-Schulz-Mattis
theorem for PEPS.

\begin{thm}\label{thm:GLSM}
Let us consider a PEPS $|\psi_A\>$ in a square $L\times N$ lattice that is injective in regions of size $L/5\times N/5$. If $|\psi_A\>$ is invariant under a representation of $U(1)$ with the usual generator of spin $J$ given by $S^{(J)}_z$, then the magnetization per particle $m$ fulfils that $(J - m)$ is an integer.
\end{thm}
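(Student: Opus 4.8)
The plan is to feed the $U(1)$ symmetry into the canonical form and then read off the integrality of $J-m$ from the way the inherited virtual representations close up after a full turn $g\mapsto g+2\pi$. First I would apply Theorem \ref{symm} to the symmetry $u_g=e^{igS^{(J)}_z}$, obtaining continuous one--parameter families of invertible matrices $Y_g,Z_g$ and a phase $e^{i\phi_g}$ which satisfy the relation of Fig. \ref{fig6}, that is $u_g A=e^{i\phi_g}\,(Y_g\otimes Z_g\otimes Y_g^{-1}\otimes Z_g^{-1})A$, with $Y_g$, $Z_g$ and $e^{i\phi_g}$ all representations. By Lemma \ref{thetadef} the generator of the phase is the magnetization per particle, so that $e^{i\phi_g}=e^{igm}$.

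The single place where the spin $J$ enters is the identity $u_{2\pi}=e^{2\pi i S^{(J)}_z}=e^{2\pi iJ}\,\1$, valid because every eigenvalue of $S^{(J)}_z$ differs from $J$ by an integer. Evaluating the symmetry relation at $g=2\pi$ and using $e^{i\phi_{2\pi}}=e^{2\pi im}$, I would obtain
\begin{equation*}
\bigl(Y_{2\pi}\otimes Z_{2\pi}\otimes Y_{2\pi}^{-1}\otimes Z_{2\pi}^{-1}\bigr)A=e^{2\pi i(J-m)}A .
\end{equation*}
Thus $A$ is left invariant, up to the scalar $e^{2\pi i(J-m)}$, by a pure gauge transformation in the bonds.

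The heart of the argument --- and the step I expect to be the main obstacle --- is to show that this forces $Y_{2\pi}$ and $Z_{2\pi}$ to be scalar multiples of the identity, so that the combined bond operator $Y_{2\pi}\otimes Z_{2\pi}\otimes Y_{2\pi}^{-1}\otimes Z_{2\pi}^{-1}$ is literally $\1$. Here I would invoke the uniqueness part of Theorem \ref{thm:main} (equivalently, apply Theorem \ref{symm} to the \emph{trivial} physical symmetry $u_{2\pi}=e^{2\pi iJ}\1$, whose state is manifestly invariant): the gauge $(Y,Z)$ relating two injective representations of the same state is unique up to the independent rescalings $Y\mapsto\lambda Y$, $Z\mapsto\mu Z$, and the identity gauge already realises the relation, so $(Y_{2\pi},Z_{2\pi})$ must be scalar. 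Since such scalars cancel inside $Y_{2\pi}\otimes Z_{2\pi}\otimes Y_{2\pi}^{-1}\otimes Z_{2\pi}^{-1}$, the displayed equation collapses to $A=e^{2\pi i(J-m)}A$, and injectivity ($A\neq 0$) gives $e^{2\pi i(J-m)}=1$, i.e. $J-m\in\mathbb{Z}$.

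The subtle point to be careful about is that for half--integer $J$ the family $u_g$ has period $4\pi$ rather than $2\pi$, so $Y_g,Z_g$ should be treated as one--parameter groups in $g\in\R$ (representations of the cover) rather than assumed a priori $2\pi$--periodic; the triviality of the holonomy $Y_{2\pi}\propto\1$ is precisely what the uniqueness argument supplies. It is exactly the failure of naive $2\pi$--periodicity of $u_g$ that produces the nontrivial constraint $J-m\in\mathbb{Z}$, rather than the vacuous one that a global phase count alone would yield.
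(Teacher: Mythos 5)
Your overall route---feed the $U(1)$ symmetry into Theorem \ref{symm}, treat $Y_g,Z_g$ as one-parameter groups over $g\in\R$, and evaluate at the holonomy point $g=2\pi$ where $u_{2\pi}=e^{2\pi iJ}\1$---is genuinely different from the paper's proof and can be made to work, but the step you yourself flag as the heart of the argument is, as written, circular. The uniqueness clause of Theorem \ref{thm:main} applies to a \emph{fixed} pair of tensors $(A,B)$: it says the gauge $(Y,Z)$ solving $A^i(Y\otimes Z\otimes Y^{-1}\otimes Z^{-1})=B^i$ is essentially unique. In your situation the gauged tensor is $B=e^{-2\pi im}u_{2\pi}A=e^{2\pi i(J-m)}A$, and the identity gauge realises $A^i=A^i\,\1$, i.e.\ it relates $A$ to $A$, not to $B$; it ``already realises the relation'' only if $e^{2\pi i(J-m)}=1$, which is exactly the conclusion sought. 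Equivalently, applying Theorem \ref{symm} to the trivial symmetry $u_{2\pi}=e^{2\pi iJ}\1$ produces a gauge together with per-site phase $e^{2\pi iJ}$, whereas your continuous family carries per-site phase $e^{2\pi im}$; uniqueness of $(Y,Z)$ at fixed $B$ says nothing about comparing decompositions with \emph{different} phases, and uniqueness of the phase is precisely the theorem being proven.

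The gap is shallow, and the repair is the same injectivity mechanism that underlies the uniqueness clause. From $A^i(Y_{2\pi}\otimes Z_{2\pi}\otimes Y_{2\pi}^{-1}\otimes Z_{2\pi}^{-1})=e^{2\pi i(J-m)}A^i$, block an injective region $R$ of size $L/5\times N/5$: the gauge matrices on interior bonds cancel pairwise, leaving $A^R\,G_{\partial R}=e^{2\pi i(J-m)|R|}A^R$, where $G_{\partial R}$ is a tensor product of copies of $Y_{2\pi}$, $Z_{2\pi}$ and their inverses on the boundary legs. Injectivity of $\Gamma_R$ forces $G_{\partial R}=e^{2\pi i(J-m)|R|}\1$, and a tensor product of invertible matrices equal to a nonzero scalar multiple of the identity has scalar factors; hence $Y_{2\pi}$ and $Z_{2\pi}$ are scalar, the single-site gauge operator is literally $\1$, and $A=e^{2\pi i(J-m)}A$ with $A\neq 0$ gives $J-m\in\mathbb{Z}$. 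With this patch your proof is correct and differs substantially from the paper's: the paper never evaluates a holonomy (and never needs the representation/continuity clause you invoke to get $\phi_g=gm$, which, note, strictly requires matching the per-site and global one-parameter phases, since Lemma \ref{thetadef} is stated for the global phase); instead it applies the tensor-level relation of Fig.~\ref{fig6} to tori of sizes $R\times S$, $(R+1)\times S$, $R\times(S+1)$ and $(R+1)\times(S+1)$, observes that nonvanishing of each toric state forces the corresponding multiple of $\theta$ to be a sum of $S_z$-eigenvalues, each congruent to $J$ modulo $\mathbb{Z}$, and isolates $\theta=m$ by inclusion--exclusion. Your monodromy argument buys conceptual clarity (the obstruction is the nontrivial $2\pi$-holonomy of the virtual representations); the paper's counting buys independence from the continuity clause at the cost of the four-torus bookkeeping.
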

If the state has full $SU(2)$ symmetry, then $m=0$ and we get the ``Lieb-Schultz-Mattis theorem'' for PEPS.

\begin{proof}
We will choose $R\ge L/5$, $S\ge N/5$ and consider the PEPS (with periodic boundary conditions) associated to the region $R\times S$, $|\psi_A^{R\times S}\>$. By injectivity it is clear that $|\psi_A^{R\times S}\>\not =0$. Applying $e^{igS_z^{(J)}}$ to all spins and using Theorem \ref{symm} we get that there must exist a choice of indices $k_{1},\ldots k_{RS}\in \{-J,-J+1,\ldots, J-1,J\}$ such that ${k_1}+\cdots +{k_{RS}}=SR\theta$. We do the same for regions of size $R\times (S+1),(R+1)\times S, (R+1)\times (S+1)$, getting indices $k',k''$ and $k'''$ respectively. Now
$$\theta=(R+1)(S+1)\theta -(R+1)S\theta- R(S+1)\theta +RS\theta=$$ $$=\sum_{r=1}^{RS} {k_r}+\sum_{r=1}^{(R+1)S}{k'_r}+\sum_{r=1}^{R(S+1)}{k''_r}+\sum_{r=1}^{(R+1)(S+1)}{k'''_r}\;.$$

The RHS has the same character as $J$, that is, it is integer if $J$ is and semi-integer if $J$ is. Therefore $\theta-J\in\mathbb{Z}$. Since, by Lemma \ref{thetadef}, $\theta$ is the magnetization per particle, we are done.
\end{proof}
\end{subsection}


\begin{subsection}{Wilson loops}
It has been observed in \cite{Verstraete} that the equal superposition of the four logical states of the toric code $|\psi\>$ has a PEPS representation with bond dimension $2$. Since the logical $X$ in the first (resp. second) logical qubit is implemented by a non-contractible cut of $\sigma_X$ operators along the vertical (resp. horizontal) direction \cite{Kitaev}, $|\psi\>$ remains invariant under these two ``Wilson loops'' (see fig. \ref{toric}).
\begin{figure}
  \includegraphics[width=8cm]{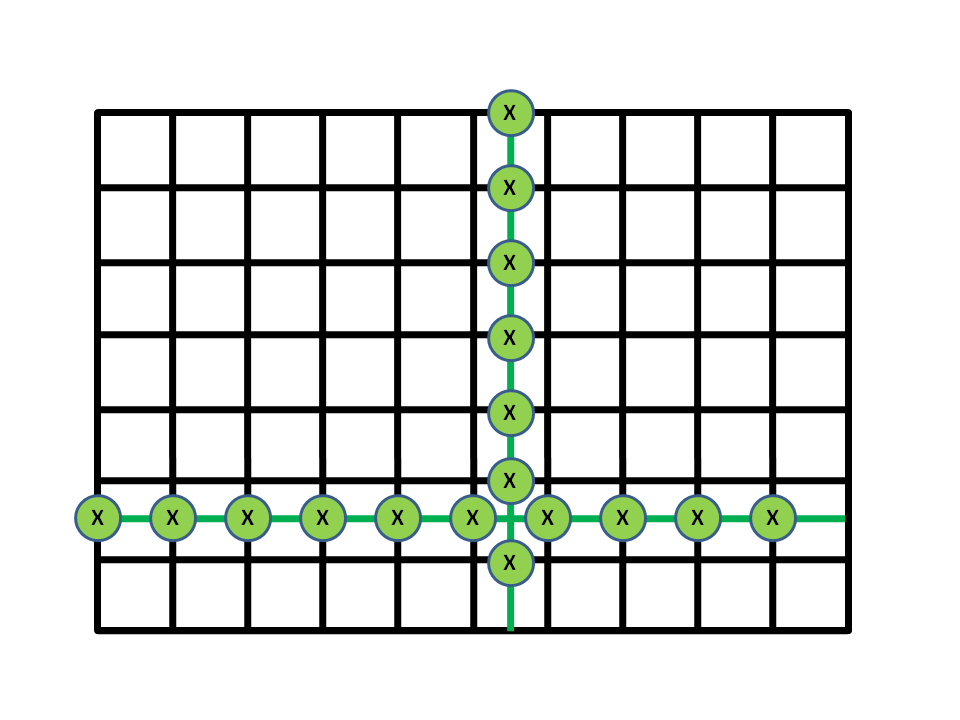}\\
  \caption{``Wilson loops'' that keep invariant the PEPS associated to the toric code.}\label{toric}
\end{figure}

We will see in this section how the existence of this kind of Wilson loops imply again that the PEPS cannot be injective.

\begin{thm}
Let $|\psi_A\>$ be PEPS in a $L\times N$ square lattice with local Hilbert space dimension $d$ such that there exists a $u\in U(d)$ with the properties:
\begin{enumerate}
\item[(i)] $u^{\otimes L}\otimes \1_{\text{rest}}|\psi_A\>=|\psi_A\>$ for a loop in the vertical direction.
\item[(ii)] $u^{\otimes N}\otimes \1_{\text{rest}}|\psi_A\>=|\psi_A\>$ for a loop in the horizontal direction.
\item[(iii)] $u\otimes \1_{\text{rest}}|\psi_A\>\not =|\psi_A\>$ for $u$ acting on a single site.
\end{enumerate}
Then $|\psi_A\>$ cannot be injective for any region of size $\le L/5\times N/5$.
\end{thm}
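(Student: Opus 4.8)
The plan is to argue by contradiction: assume $|\psi_A\>$ is injective for some region of size $\le L/5\times N/5$, and show that the two loop symmetries then force $u$ to act as a scalar on a single site, contradicting (iii). First I would promote the loops to an honest on-site symmetry: the $N$ vertical loops act on pairwise disjoint columns and each fixes $|\psi_A\>$ by (i), so their product $u^{\otimes NL}$ fixes $|\psi_A\>$, i.e. $u$ is a global symmetry with trivial phase $e^{i\theta'}=1$. Theorem \ref{symm} then yields invertible matrices $Y$ (on the horizontal bonds) and $Z$ (on the vertical bonds) and a phase $e^{i\theta}$ with $e^{i\theta NL}=1$, such that acting with $u$ on the physical index of a single tensor equals $e^{i\theta}$ times the virtual gauge $Y\otimes Z\otimes Y^{-1}\otimes Z^{-1}$ (Fig. \ref{fig6}).

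The core of the argument is to show that $Y$ and $Z$ are scalar. Grouping each column of height $L$ into a single MPS super-site, Lemma \ref{lema2} grows the injective $L/5\times N/5$ region to a full-height $L\times N/5$ region, so the resulting column-MPS (bond dimension $D^L$) is injective. Contracting the single-site relation along a whole column, the matrices $Z,Z^{-1}$ on every internal vertical bond cancel as a trivial gauge (exactly as in Fig. \ref{figtheorem}), leaving $\hat Y:=Y^{\otimes L}$ conjugating the contracted column tensor $A^{R}$ up to the factor $e^{i\theta L}$; that is, acting with $u^{\otimes L}$ on the physical legs of the column gives $e^{i\theta L}\hat Y A^{R}\hat Y^{-1}$. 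On the other hand, since the single vertical loop $u^{\otimes L}$ is a unitary on one super-site that fixes the state, I would vary the environment over the remaining super-sites and use injectivity (the products spanning the full algebra $M_{D^L}$) to conclude that $u^{\otimes L}$ already fixes the bare column tensor, i.e. it returns $A^{R}$ itself. Equating the two expressions gives $\hat Y A^{R}=e^{-i\theta L}A^{R}\hat Y$; iterating over an injective number of super-sites and again using that the products span $M_{D^L}$ forces $\hat Y\propto\1$, so $Y^{\otimes L}$ is scalar and hence so is $Y$. The identical argument applied to rows, using (ii), gives $Z\propto\1$.

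With $Y,Z$ scalar the gauge $Y\otimes Z\otimes Y^{-1}\otimes Z^{-1}$ collapses to $\1$, so the single-site relation becomes $\sum_i u_{ij}A^i=e^{i\theta}A^j$, i.e. $u\otimes\1_{\text{rest}}|\psi_A\>=e^{i\theta}|\psi_A\>$ for $u$ on a single site. Applying this site by site along a column gives $u^{\otimes L}|\psi_A\>=e^{i\theta L}|\psi_A\>$, which by (i) equals $|\psi_A\>$, so $e^{i\theta L}=1$; likewise (ii) gives $e^{i\theta N}=1$. Thus $u$ acts on a single site merely as the scalar $e^{i\theta}$ (a common $L$-th and $N$-th root of unity, hence $1$ whenever $\gcd(L,N)=1$), which contradicts hypothesis (iii) and completes the argument.

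I expect the main obstacle to be the middle step: converting invariance of the \emph{state} under a single-super-site loop into invariance of the bare column \emph{tensor}, and then extracting that the residual virtual action $\hat Y$ is scalar. The two delicate points are (a) the environment-spanning argument, which genuinely uses injectivity rather than the weaker global symmetry produced by Theorem \ref{symm}, and (b) the bookkeeping showing that the vertical $Z$'s cancel along the column so that the leftover virtual operator is precisely $Y^{\otimes L}$. Tracking the phase $e^{i\theta}$ is the only remaining loose end, and it is closed by the loop conditions $e^{i\theta L}=e^{i\theta N}=1$.
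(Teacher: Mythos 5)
Your proposal follows essentially the same route as the paper's proof: promote the $N$ disjoint vertical loops to the global symmetry $u^{\otimes NL}|\psi_A\>=|\psi_A\>$, invoke Theorem \ref{symm} to obtain $Y$, $Z$ and the phase $e^{i\theta}$ (with $e^{i\theta NL}=1$), and then use the single-loop invariances (i), (ii) together with injectivity to force $Y$ and $Z$ to be scalar, so that $u$ on a single site acts as a phase on the tensor. The paper compresses your entire middle step into one line (``applying (i) to $N/5$ columns and injectivity we get that $Y=\1$''), and the details you supply --- passing from invariance of the \emph{state} under one column loop to invariance of the bare column tensor by letting the injective environment span all of $M_{D^L}$, the cancellation of the $Z$'s around the closed vertical cycle of the torus so that only $\hat Y=Y^{\otimes L}$ survives on the horizontal boundary, and the iteration-plus-spanning argument giving $\hat Y\propto\1$ and hence $Y$ scalar --- are exactly the intended mechanism and are correct (your points (a) and (b) are sound, with Lemma \ref{lema2} providing the full-height injectivity you need).

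The one place where you are in fact \emph{more} careful than the paper is the residual phase. The paper's proof silently concludes $u\otimes\1_{\text{rest}}|\psi_A\>=|\psi_A\>$, i.e.\ $e^{i\theta}=1$, whereas, as you observe, the loop conditions only yield $e^{i\theta L}=e^{i\theta N}=1$ and hence $e^{i\theta\gcd(L,N)}=1$. Your caveat is genuine: for $g=\gcd(L,N)>1$, the scalar $u=e^{2\pi i/g}\1_d$ applied to \emph{any} injective PEPS satisfies (i)--(iii) as literally written, so the statement needs either $\gcd(L,N)=1$ or hypothesis (iii) read projectively, $u\otimes\1_{\text{rest}}|\psi_A\>\not\propto|\psi_A\>$. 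Under that projective reading your argument, like the paper's, closes completely: the single-site action is the scalar $e^{i\theta}$, so the state is invariant up to phase, contradicting (iii). In short, your proof is the paper's proof fleshed out, and your ``loose end'' is not a defect of your write-up but a phase subtlety the paper itself glosses over.
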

\begin{proof}
We assume injectivity for a region of size $L/5\times N/5$, (i) and (ii) and will show that (iii) does not hold. By applying (i) to all columns and Theorem \ref{symm}, we get that there exist unique $Y$ and $Z$ such that Fig. \ref{fig6} holds. Applying now (i) to $N/5$ columns and injectivity we get that $Y=\1$ and applying (ii) to $L/5$ rows and injectivity we get that $Z=\1$. So $u\otimes \1_{\text{rest}}|\psi_A\> =|\psi_A\>$ for $u$ acting on a single site.
\end{proof}

\end{subsection}
\end{section}


\begin{section}{Conclusions}
In this work we have provided a simple characterization of the existence of symmetries in PEPS. The result is based on the proven existence of a ``canonical form''. Since PEPS seem to  give a fairly complete characterization of the low energy sector of local Hamiltonians, the result paves the way for a better understanding of the restrictions that symmetries impose on quantum systems. As a first example of the kind of results that one can obtain from this characterization, we have shown a 2D version of the Oshikawa-Yamanaka-Affleck extension for $U(1)$ of the Lieb-Schultz-Mattis theorem. We have also outlined, via the injectivity property, how three of the main indicators of topological order (degeneracy of the ground state, existence of Wilson loops and corrections to the area law) are related.
\end{section}


\begin{section}{Acknowledgments}
M. Sanz would like to thank the QCCC Program of the
EliteNetzWerk Bayern as well as the DFG (FOR 635, MAP and NIM) for
the support. D. Perez-Garcia and C. Gonzalez-Guillen acknowledge financial support from
Spanish grants I-MATH, MTM2008-01366 and CCG08-UCM/ESP-4394 and M.M.
Wolf acknowledges support by QUANTOP and the Danish Natural Science
Research Council(FNU).
\end{section}

\end{document}